\newif\ifjournal
\newtheorem{theorem}{Theorem}[section]
\numberwithin{equation}{section}  
\newcommand{\calM}{\mathcal{M}}
\newcommand{\abs}[1]{\left|#1\right|}
\DeclareMathOperator{\ck}{\mathcal{L}}
\DeclareMathOperator{\Lap}{\Delta}
\renewcommand{\div}{\mathop{\rm div}\nolimits}
\newcommand{\dtt}{\frac{d\tau}{\tau}}
\newcommand{\dxx}{\frac{d\xi}{\xi}}
\newcommand{\R}{\mathbb{R}}
\newlength\plotheight
\newcounter{mnotecount}[section]
\let\oldmarginpar\marginpar
\renewcommand\marginpar[1]{\-\oldmarginpar[\raggedleft\footnotesize #1]%
{\raggedright\footnotesize #1}}
\begin{document}

\title[Numerical Bifurcation Analysis of the Conformal Method]
      {Numerical Bifurcation Analysis of the Conformal Method}

\author[J. Dilts]{James Dilts}
\email{jdilts@ucsd.edu}

\author[M. Holst]{Michael Holst}
\email{mholst@ucsd.edu}

\author[T. Kozareva]{Tamara Kozareva}
\email{txk171630@utdallas.edu}

\author[D. Maxwell]{David Maxwell}
\email{damaxwell@alaska.edu}

\address{Department of Mathematics\\
         University of California, San Diego\\ 
         La Jolla CA 92093}
\address{Department of Mathematics\\
         University of Texas, Dallas \\
         Dallas, TX 75080}
\address{Department of Mathematics\\
         University of Alaska, Fairbanks \\ 
         Fairbanks, AK 99775}
\thanks{JD was supported in part by NSF DMS/RTG Award 1345013 and DMS/FRG Award 1262982.}
\thanks{MH was supported in part by NSF DMS/FRG Award 1262982 and NSF DMS/CM Award 1620366.}
\thanks{TK was supported in part by NSF DMS/RTG Award 1262982.}
\thanks{DM was supported in part by NSF DMS/FRG Award 1263544.}

\date{\today}
\subjclass[2010]{35Q75, 58J55, 83C05, 65N99}
\keywords{nonlinear partial differential equations, the conformal method, 
                folds, bifurcation, pseudo-arclength continuation, AUTO}

\begin{abstract}
The conformal formulation of the Einstein constraint equations has been 
studied intensively since the modern version of the conformal method was 
first published in the early 1970s.
Proofs of existence and uniqueness of solutions were limited to the 
constant mean curvature (CMC) case through the early 90s,
with analogous results for the near-CMC case beginning to appear
thereafter.  In the last decade, there has been some limited
progress towards understanding the properties of the conformal
method for far-from-CMC solutions as well.  Although it was 
initially conceivable that that these far-from-CMC results
would lead to a solution theory for the non-CMC case that would mirror the 
good properties of the CMC and near-CMC cases,  examples of bifurcations and
of nonexistence of solutions have been since discovered.  Nevertheless, the general properties of the conformal method for far-from-CMC data remain
unknown. In this article we apply analytic and numerical continuation techniques 
to the study of the conformal method, in an attempt to give some insight 
into what the solution behavior is in the far-from-CMC case in various
scenarios.
\end{abstract}

\vspace*{-0.5cm}
\maketitle
\tableofcontents

\section{Introduction}
   \label{sec:intro}

In general relativity spacetime is described by by
a Lorentzian manifold $(\calM,g)$,
that is, a four-dimensional differentiable manifold $\calM$ endowed with 
a non-degenerate, symmetric rank $(0,2)$ tensor field 
$\mathbf g$ on $\calM$ whose signature is 
$(-1,1,1,1)$. 
The space-time $(\calM,g)$ is required to satisfy the
 {\it Einstein field equations}, 
\begin{equation}
\mathrm{Ric}_{\mathbf g} - \frac{1}{2} R_{\mathbf g}
       = \frac{8 \pi G}{c^4} T \, , \label{eq:einstein}
\end{equation}
where  $\mathrm{Ric}_{\mathbf g}$ 
is the Ricci curvature tensor, $R_{\mathbf g}$ its scalar 
($R:=\mathrm{Ric}_{ab}g^{ab}$), and $T$ is the stress 
energy-momentum tensor of any matter fields present.
Once a time function has been chosen, space-time is foliated by
space-like constant-time hypersurfaces $\Sigma_t$ and evolution 
and constraint equations are obtained by considering the projections 
of the field equations~\eqref{eq:einstein} in directions tangent and 
orthogonal to the space-like hypersurfaces. The evolution equations 
can be cast as a first-order system for the first and second 
fundamental forms associated with the time slices, namely the 
three-metric $\hat g$ and extrinsic curvature $\hat k$.
With $\hat g$ and $\hat k$ symmetric tensors, this represents 12
equations for the 12 components of $\hat g$ and $\hat k$, 
with the equations being first-order in time and second-order in space.

The four constraint equations on the 12 degrees of freedom are
\begin{eqnarray}
R_g + \tau^2 - |\hat k|_g^2 &=& 0,
\label{Eq:HamConstr}\\
\div \hat k - d\tau & = & 0
\label{Eq:MomConstr}
\end{eqnarray}
with $R_g$ the scalar curvature of $g$
and $\tau = \hat g^{ij} \hat k_{ij}$ the trace of the extrinsic curvature. 
These equations are direct consequences of the Gauss-Codazzi-Mainardi
conditions which are required for an $3$-manifold to arise as a 
submanifold of a $4$-manifold.
If matter and/or energy sources are present, then the 12 evolution equations
and the four constraint equations~\eqref{Eq:HamConstr}--\eqref{Eq:MomConstr}
contain additional terms.
The constraint equations are obviously underdetermined as a stand-alone
system of equations for the initial data, in that they fix only some
part of the 12 degrees of freedom.
One must therefore make a choice of which parts of the initial data one wishes
to fix, and which parts are to be determined by the constraint equations
~\eqref{Eq:HamConstr}--\eqref{Eq:MomConstr}.
The \emph{conformal method}, described in the next section, is an approach
to parameterizing the initial data so that the constraint equations for
the remaining degrees of freedom can potentially be uniquely solved.
It provides an effective parameterization of the constant-mean-curvature
(CMC) solutions of the constraint equations, and is generally effective
for near-constant mean curvatures as well.  There has been recent 
progress in determining its properties in the far-from-CMC setting, and
what little we know indicates the situation is somewhat complex.  The
aim of this paper is to bring numerical methods, and numerical bifurcation theory specifically, to yield further insight into what can be expected
for the conformal method when applied to far-from-CMC initial data.

\subsection{The Conformal Method}
   \label{sec:conformal-method}

The conformal method was proposed by Lichnerowicz in 1944~\cite{L44},
and then substantially generalized in the 1970s by York~\cite{Y73},
among other authors.
The method is based on a splitting of the initial data $\hat{g}$
(a Riemannian metric on a space-like hypersurface $\Sigma_t$) 
and $\hat{k}$ (the extrinsic curvature of the hypersurface 
$\Sigma_t$) into eight freely specifiable pieces, with four remaining 
pieces to be determined by solving the four constraint equations.

The pieces of the initial data 
that are specified as part of the method are called the \emph{seed data} 
and are comprised of a spatial background metric $g$ on $\Sigma_t$, 
defined up to multiplication by a conformal factor (five free functions), 
a positive function $N$ (a so-called densitized lapse), 
a function $\tau$, and a transverse, traceless (TT) tensor $\sigma_{ij}$ 
(effectively two free functions, as it is symmetric, trace-free 
and divergence free).
The two remaining pieces of the initial data to be determined by 
the constraints are a scalar conformal factor $\varphi>0$ and 
a vector potential $W$.
The full spatial metric $\hat{g}$ and the extrinsic curvature 
$\hat{k}$ are then recovered from $\varphi$, $W$, and the 
eight specified functions from the expressions:
$
\hat{g} = \varphi^4 g,
$
and
$
\hat{k} = \varphi^{-2}\left[\sigma + \frac{1}{2N}(ck W)\right]
    + \frac{1}{3} \varphi^{4} \tau g.
$
This transformation has been engineered so that the constraints~
\eqref{Eq:HamConstr}--\eqref{Eq:MomConstr} reduce to coupled PDEs
for $\varphi$ and $W$ with standard elliptic operators as their principle parts; in three spatial dimensions the equations are
\begin{align}
- 8 \Delta \varphi 
    + R \varphi
    + \frac{2}{3} \tau^2 \varphi^5
    - \left|\sigma+\frac{1}{2N}\mathcal{L}W\right|^2
      \varphi^{-7}
&=0,
\label{eqn:ham_conf}
\\
- \div\left[\frac{1}{2N} \ck W\right] + \frac{2}{3} \varphi^6 d\,\tau 
&=0.
\label{eqn:mom_conf}
\end{align}
Here, $\Delta$ is the Laplace-Beltrami operator with respect to the 
background metric $g$, $\mathcal{L}$ denotes the 
\emph{conformal Killing operator}
$(\mathcal{L}W)_{ij} = \nabla_i W_j + \nabla_j W_i - \frac{2}{3} (\nabla_k W^k) g_{ij}$, 
and $\tau= \hat{k}_{ij}\hat{g}^{ij}$ is again the trace of the extrinsic curvature. 
We note that the densitized lapse $N$ is more commonly associated with the
conformal thin-sandwich method\cite{Y99}, but the equivalence of
that method with the
standard conformal method was demonstrated in \cite{Maxwell:2014a}.
In particular, the conformal method represents a family of parameterizations
of the constraint equations within a given conformal class of metric, 
one for each choice of $N$ (or one for each choice of metric representing the conformal class). A detailed overview of the conformal method, 
and its variations, may be found 
in the 2004 survey \cite{BI04}. 

When $\tau$ is constant (i.e. when the Cauchy surface $\Sigma_t$ has constant mean curvature), then the term in equation~\eqref{eqn:mom_conf} 
involving $d\, \tau$ vanishes, and the two equations decouple.
The only solutions of~\eqref{eqn:mom_conf} have $\ck W=0$,
and it remains only to solve the Lichnerowicz equation
~\eqref{eqn:ham_conf} with $\ck W=0$; a similar decoupling occurs for non-vacuum seed data data.
Initial work starting with \cite{MuYo73} focused on the CMC
case of the conformal method, and a full description of the 
parameterization on compact manifolds 
was achieved in \cite{JI95}.  The theory
depends on on the Yamabe invariant $Y(g)$ of the seed metric\footnote{Recall
that $Y(g)>0$ if and only if $g$ has a conformally related 
metric with positive scalar curvature, and similarly for $Y(g)=0$
and $Y(g)<0$.}, and 
is summarized in Table~\ref{table:cmcSolv}.
\begin{center}
\begin{table}[h]
\caption{Constant mean curvature (CMC) solvability \cite{JI95}}
\centering
\begin{tabular}{|c|c|c|c|c|}
\hline
       & $\tau=0, \sigma \equiv 0$ & $\tau=0, \sigma \not\equiv 0$ &
         $\tau\neq 0, \sigma\equiv 0$ & $\tau\neq 0, \sigma\not\equiv 0$  \\ \hline
$Y(g)>0$ & None   & Unique & None  & Unique \\ \hline
$Y(g)=0$ & Unique up to homotopy & None  & None  & Unique \\ \hline
$Y(g)<0$ & None   & None  & Unique & Unique \\ \hline
\end{tabular}
\label{table:cmcSolv}
\end{table}
\end{center}
The CMC conformal method is also well understood in other 
asymptotic geometries (e.g, asymptotically Euclidean \cite{CBI00}\cite{dM05b}, asymptotically hyperbolic \cite{AC96}).  
It has been used in a number of applications, including results for open 
manifolds with interior ``black hole'' boundary models~\cite{sD04,dM05b}, results allowing for ``rough'' data~\cite{DMa06,yCB04},
and numerical relativity (e.g.~\cite{Cook91,CoTe99})

Investigations of near-CMC seed data began to appear in the mid-90s,
and we point to \cite{IM96}, \cite{ACI08} and \cite{IOM04}
which developed the near-CMC theory\footnote{The specific conditions characterizing near-CMC seed data depend on the context but all
involve control on $d\tau/\tau$.} on compact manifolds
summarized in Table~\ref{table:conjSolv}.
\begin{center}
\begin{table}[h]
\caption{Near-CMC Solvability  \cite{IM96,ACI08,IOM04}}
\centering
\begin{tabular}{|c|c|c|}
\hline
       & $\tau\not\equiv 0, \sigma\equiv 0$ & $\tau\not\equiv 0, \sigma\not\equiv 0$  \\ \hline
$Y(g)>0$ & {None}  & {Unique}\\ \hline
$Y(g)=0$ & None  & {Unique} \\ \hline
$Y(g)<0$ & Unique & Unique \\ \hline
\end{tabular}
\label{table:conjSolv}
\end{table}
\end{center}
The existence results in this table
require an additional hypothesis that the background metric not have
any conformal Killing fields, and it has been recently shown
\cite{HMM2017} that in some cases this hypothesis is necessary.
Conformal Killing fields form the kernel of the 
self-adjoint elliptic operator appearing
in equation \eqref{eqn:mom_conf}, and their presence interferes with
iterative approaches to obtaining solutions; nearly all theorems
concerning non-CMC seed data for the conformal method assume there are no
conformal Killing fields.  
Extensions of the near-CMC theory are available in other asymptotic
geometries (asymptotically Euclidean \cite{CBI00}, asymptotically
hyperbolic \cite{IP}), and it has been applied in
numerical relativity~\cite{Cook:2000vr,BeHo96,Hols2001a,Pfei04}.

\subsection{Far-from-CMC results for the conformal method}
\label{ssec:ffcmcresults}

In the last decade, a handful of results have appeared 
concerning the conformal method in the far-from CMC setting.
They provide the main context needed to understand our numerical 
experiments, and we summarize them in somewhat more detail 
in this section.

Building off of a strictly non-vacuum result \cite{HNT07b},
the following theorem provides existence, in vacuum, for arbitrary 
mean curvatures, so long as the background metric is Yamabe positive 
and the TT tensor is sufficiently small.
\begin{theorem}[\cite{dM09}]\label{thm:smallTT} Let $(M,g)$ be a compact Yamabe-positive
manifold with no conformal Killing fields.  Given arbitrary vacuum
seed data $(\tau,\sigma,N)$, if $\sigma\not\equiv 0$ 
is sufficiently small (with smallness depends on the choice of $\tau$), 
then there exists at least one solution of the conformally
parameterized constraint equations \eqref{eqn:ham_conf}-\eqref{eqn:mom_conf}.
\end{theorem}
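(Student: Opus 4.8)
The plan is to collapse the coupled system \eqref{eqn:ham_conf}--\eqref{eqn:mom_conf} into a single fixed-point problem for the conformal factor and solve it by Schauder's theorem. First I would normalize the seed data: since $Y(g)>0$, after a conformal change of the seed metric (and the attendant changes of $N$, $\sigma$, and the unknown $\varphi$, under which the conformal equations are covariant) one may assume $R$ is a positive constant. Next I would dispose of the momentum constraint. Because $(M,g)$ has no conformal Killing fields, $W\mapsto -\div[\tfrac{1}{2N}\mathcal{L}W]$ is an isomorphism between the relevant Sobolev spaces, so for every continuous $\varphi\ge 0$ there is a unique $W_\varphi$ solving \eqref{eqn:mom_conf}; it depends linearly on $\varphi^{6}$, and elliptic estimates followed by Sobolev embedding give the basic bound $\|\mathcal{L}W_\varphi\|_{C^{0}}\le c\,\|\varphi\|_{C^{0}}^{6}$, with $c$ depending on $g$, $N$ and $\|d\tau\|$ (innocuous near CMC, large far from it). The assignment $\varphi\mapsto W_\varphi$ is moreover continuous and compact.

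I would then analyze the Lichnerowicz equation \eqref{eqn:ham_conf} with the TT-term frozen, i.e.\ with $|\sigma+\tfrac{1}{2N}\mathcal{L}W|^{2}$ replaced by a fixed $\calN\ge 0$, $\calN\not\equiv 0$, whose unique positive solution I call $\Phi(\calN)$ (solvability in the Yamabe-positive case is standard, and for $\tau$ constant is already the ``$\tau\neq 0$, $\sigma\not\equiv 0$'' entry of Table~\ref{table:cmcSolv}). The decisive estimate is a \emph{global supersolution}: a constant $\varphi_{+}$ is a supersolution of \eqref{eqn:ham_conf} exactly when $R\varphi_{+}^{8}+\tfrac{2}{3}\tau^{2}\varphi_{+}^{12}\ge\calN$. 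Inserting $\calN=|\sigma+\tfrac{1}{2N}\mathcal{L}W_\varphi|^{2}\le 2\|\sigma\|_{C^{0}}^{2}+c'\|\varphi\|_{C^{0}}^{12}\|d\tau\|^{2}$ and imposing $\varphi\le\varphi_{+}$, this reduces (in the worst case, at the zeros of $\tau$) to $R\ge 2\|\sigma\|_{C^{0}}^{2}\varphi_{+}^{-8}+c'\|d\tau\|^{2}\varphi_{+}^{4}$; minimizing the right-hand side in $\varphi_{+}$ shows that such a $\varphi_{+}$ exists precisely when $\|\sigma\|$ is small, with the threshold scaling like $R^{3/2}/\|d\tau\|^{2}$. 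This is the one place the hypotheses ``$Y(g)>0$'' and ``$\sigma$ small'' are consumed. Fixing such a $\varphi_{+}$, the composite $T\colon\varphi\mapsto\Phi\big(|\sigma+\tfrac{1}{2N}\mathcal{L}W_\varphi|^{2}\big)$ sends $\{0\le\varphi\le\varphi_{+}\}$ into $\{0<\varphi\le\varphi_{+}\}$.

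The remaining and hardest ingredient is a uniform positive lower barrier, needed so that the singular $\varphi^{-7}$ term is controlled, so that $T$ is compact (elliptic regularity requires two-sided bounds), and so that Schauder applies on an order interval $[\varphi_{-},\varphi_{+}]$. Unlike the non-vacuum result \cite{HNT07b}, there is no matter energy density supplying a favourable zeroth-order term, and a constant $\varphi_{-}$ fails at any point where $\sigma+\tfrac{1}{2N}\mathcal{L}W_\varphi$ vanishes. The device I would use is the substitution $\varphi=\psi v$, where $\psi>0$ solves the decoupled equation $-8\Delta\psi+R\psi=|\sigma+\tfrac{1}{2N}\mathcal{L}W_\varphi|^{2}\psi^{-7}$ --- solvable because $Y(g)>0$ and the source is nonnegative and not identically zero (the $\tau\equiv 0$, $\sigma\not\equiv 0$ entry of Table~\ref{table:cmcSolv}); after this conformal change the equation for $v$ carries a nonnegative scalar-curvature term and is tuned so that $v\equiv 1$ is a genuine subsolution, which pulls back to a strictly positive subsolution $\varphi_{-}\le\varphi_{+}$ of \eqref{eqn:ham_conf}. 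Executing this construction while keeping $\varphi_{-}$, $\varphi_{+}$ and the regularity estimates uniform along the coupling is the technical heart of the matter, the main obstacle being precisely that the zero set of the TT-term moves with $\varphi$. Granting it, Schauder's fixed-point theorem produces $\varphi\in[\varphi_{-},\varphi_{+}]$ so that $(\varphi,W_\varphi)$ solves \eqref{eqn:ham_conf}--\eqref{eqn:mom_conf}, with the regularity of the solution matching that of the seed data by elliptic bootstrapping.
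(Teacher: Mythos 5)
This theorem is only quoted in the present paper (from \cite{dM09}), so I am comparing your proposal with the argument in that reference. Your scaffolding --- inverting the momentum constraint using the absence of conformal Killing fields, the estimate $\|\mathcal{L}W_\varphi\|_{C^0}\le c\|\varphi\|_{C^0}^6$, the constant global supersolution from $R\varphi_+^{8}\ge 2\|\sigma\|_\infty^{2}+c'\|d\tau\|^{2}\varphi_+^{12}$ (where Yamabe positivity and smallness of $\sigma$ are spent, with the threshold scaling you computed), and a Schauder fixed point for $T\colon\varphi\mapsto\Phi\bigl(|\sigma+\tfrac{1}{2N}\mathcal{L}W_\varphi|^{2}\bigr)$ --- is exactly the strategy of \cite{HNT07b} and \cite{dM09}. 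The genuine gap is the lower barrier, and the device you propose does not work. If $\psi$ solves $-8\Delta\psi+R\psi=\mathcal{N}\psi^{-7}$ and you set $\hat g=\psi^{4}g$, $\varphi=\psi v$, the transformed equation is $-8\Delta_{\hat g}v+\hat{\mathcal{N}}v+\tfrac23\tau^{2}v^{5}-\hat{\mathcal{N}}v^{-7}=0$ with $\hat{\mathcal{N}}=\psi^{-12}\mathcal{N}=R_{\hat g}\ge 0$. Plugging in a constant $v\equiv\beta\le 1$ leaves $\hat{\mathcal{N}}(\beta-\beta^{-7})+\tfrac23\tau^{2}\beta^{5}$, which is $\ge 0$ and strictly positive wherever $\mathcal{N}=0$ and $\tau\neq 0$ (for $\beta=1$ it equals $\tfrac23\tau^{2}$). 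So constants, including $v\equiv 1$, are \emph{supersolutions} of the transformed equation, not subsolutions: the conformal change absorbs the TT source but leaves the $\tau^{2}$ term --- exactly the troublesome term --- untouched. Even with a corrected per-source subsolution (e.g.\ $\epsilon u$ with $u$ solving the linear equation $-8\Delta u+(R+\tfrac23\tau^{2})u=\mathcal{N}$), you would still need a single $\varphi_-$ that is a subsolution simultaneously for every source $\mathcal{N}_\varphi$ arising from $\varphi$ in the order interval; since $\mathcal{N}_\varphi$ admits no pointwise lower bound independent of $\varphi$ (its zero set moves, as you observe), this uniformity is precisely what your ``granting it'' assumes, and the proof is incomplete at its decisive step.

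What \cite{dM09} actually does is dispense with a global subsolution altogether. Since $\sigma$ is transverse-traceless it is $L^{2}$-orthogonal to $\mathcal{L}W$ (exactly so in the constant-lapse formulation treated there), hence $\int|\sigma+\mathcal{L}W_\varphi|^{2}\ge\int|\sigma|^{2}>0$ for every $\varphi$ in the set: the coupling can move the zeros of the source but cannot cancel it on average. Combined with the upper bound $\varphi\le\varphi_+$, a Green's-function/Harnack-type estimate for the operator $-8\Delta+R+\tfrac23\tau^{2}\varphi^{4}$ then yields a uniform positive pointwise lower bound for every function in the image of $T$, which is what supplies compactness and lets Schauder run on the set $\{0\le\varphi\le\varphi_+\}$ without any subsolution. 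That integral non-cancellation estimate is the vacuum-specific ingredient missing from your proposal; with it in place, your momentum estimate, supersolution computation and fixed-point outline go through essentially as you wrote them.
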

Variations on this theorem have subsequently been demonstrated in other 
contexts (asymptotically Euclidean 
manifolds~\cite{DIMM14,HoMe14a,BeHo14a}, 
manifolds with asymptotically cylindrical or 
periodic ends~\cite{CM12,CMP12}, and other settings
~\cite{HoTs10a,HMT13a,Dilts:2013,HNT07b,BeHo14a}). 
Although  
Theorem \ref{thm:smallTT} is silent on the issue of uniqueness,
it is consistent with Table \ref{table:conjSolv}
extending generally for arbitrary mean curvatures, and there was
some optimism that this might be the case when
\cite{HNT07b,dM09} appeared.  Although it is evidently
a far-from CMC result, the alternative
perspective of \cite{GiNg14a} demonstrates that the solutions
found in Theorem \ref{thm:smallTT} can also be thought of as rescalings
of near-CMC solutions that are perturbations off of $\tau\equiv 0$
solutions, as allowed in Table \ref{table:cmcSolv} for Yamabe-positive
seed data.  

The following result is a consequence of a groundbreaking blowup
analysis for the conformal method.
\begin{theorem}[\cite{DGH10}]\label{thm:limiteq} Let $(M^n,g)$ be a compact manifold
without conformal Killing fields,
and let $(\sigma,\tau,N)$ be vacuum seed data on it with $\tau\ne 0$ having
constant sign.  If
there does not exist a solution of the conformally
parameterized constraint equations 
\eqref{eqn:ham_conf}-\eqref{eqn:mom_conf}, then 
there exists a solution of the \textit{limit equation}
\begin{equation}\label{eq:limitEqn}
  \div\left[ \frac{1}{2N} \ck W\right] = \alpha \sqrt{\frac{n-1}{n}} \left|\frac{1}{2N}\ck W\right| \dtt.
\end{equation}
for some $\alpha\in(0,1]$.
\end{theorem}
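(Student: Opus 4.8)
The plan is to rewrite the coupled system \eqref{eqn:ham_conf}--\eqref{eqn:mom_conf} as a fixed-point equation for $W$, deform the coupling between the two equations to zero, and read off the limit equation \eqref{eq:limitEqn} from a blow-up of the family of solutions produced along the deformation. (In $n$ dimensions \eqref{eqn:ham_conf}--\eqref{eqn:mom_conf} have the same structure, with $\varphi^5$, $\varphi^{-7}$, $\varphi^6$ replaced by $\varphi^{(n+2)/(n-2)}$, $\varphi^{-(3n-2)/(n-2)}$, $\varphi^{2n/(n-2)}$ and the coefficient $\tfrac23$ by $\tfrac{n-1}{n}$.)

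First I would reduce to a single equation. Since $\tau$ is continuous, nowhere zero and of constant sign on the compact $M$, one has $|\tau|\ge c>0$; hence for every vector field $W$ the Lichnerowicz equation \eqref{eqn:ham_conf} is of defocusing type, and the method of sub- and supersolutions gives a unique positive solution $\varphi=\Phi(W)$, with $\Phi$ continuous and compact on a fixed $W^{2,p}$ ($p>n$). Because $(M,g)$ has no conformal Killing fields, the self-adjoint elliptic operator $\div[\tfrac1{2N}\ck\,\cdot\,]$ of \eqref{eqn:mom_conf} has trivial kernel, hence is an isomorphism, so a solution of the system is exactly a fixed point of the compact map $\mathcal F$ sending $W$ to the solution $W'$ of $\div[\tfrac1{2N}\ck W']=\tfrac{n-1}{n}\Phi(W)^{2n/(n-2)}\,d\tau$. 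I would then introduce the homotopy $\mathcal F_t$, $t\in[0,1]$, obtained by multiplying the right-hand side by $t$; then $\mathcal F_0$ is the zero map (the $t=0$ momentum equation forces $\ck W=0$, hence $W=0$), so $\deg(I-\mathcal F_t,B_R,0)=\deg(I,B_R,0)=1$ as long as the family has no fixed point on $\partial B_R$. (If $\sigma\equiv0$ one first replaces the source $|\sigma+\tfrac1{2N}\ck W|^2$ by $|\sigma+\tfrac1{2N}\ck W|^2+(1-t)$ so that $\Phi$ is defined throughout the homotopy; this has no effect on a blow-up, where the source is large.)

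Now the dichotomy. If the fixed points of $\mathcal F_t$ were bounded in $W^{2,p}$ uniformly in $t\in[0,1]$, homotopy invariance of the Leray--Schauder degree would force a fixed point at $t=1$, i.e.\ a solution of \eqref{eqn:ham_conf}--\eqref{eqn:mom_conf}, contrary to hypothesis. So there are $t_k\in[0,1]$ and solutions $(\varphi_k,W_k)$ of the $t_k$-deformed system with $\|W_k\|_{W^{2,p}}\to\infty$. Applying $(\div[\tfrac1{2N}\ck\,\cdot\,])^{-1}$ to the momentum equation gives $\|W_k\|_{W^{2,p}}\lesssim1+\|\varphi_k\|_{C^0}^{2n/(n-2)}$, while the global supersolution for \eqref{eqn:ham_conf} gives the reverse bound $\|\varphi_k\|_{C^0}^{2n/(n-2)}\lesssim1+\|\tfrac1{2N}\ck W_k\|_{C^0}$; hence $\gamma_k:=\|\tfrac1{2N}\ck W_k\|_{C^0}\to\infty$. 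Set $\widetilde W_k:=W_k/\gamma_k$ and $\psi_k:=\gamma_k^{-(n-2)/(2n)}\varphi_k$, and pass to a subsequence with $t_k\to\alpha\in[0,1]$. Dividing \eqref{eqn:ham_conf} by $\gamma_k^{(n+2)/(2n)}$ annihilates the Laplacian and curvature terms in the limit and leaves the pointwise relation $\tfrac{n-1}{n}\tau^2\psi_\infty^{(n+2)/(n-2)}=\bigl|\tfrac1{2N}\ck\widetilde W_\infty\bigr|^2\psi_\infty^{-(3n-2)/(n-2)}$, i.e.\ $\psi_\infty^{2n/(n-2)}=\sqrt{\tfrac{n}{n-1}}\,|\tfrac1{2N}\ck\widetilde W_\infty|/|\tau|$. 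Meanwhile the supersolution bound keeps $\psi_k$ bounded in $C^0$, so the rescaled momentum source $t_k\tfrac{n-1}{n}\psi_k^{2n/(n-2)}\,d\tau$ is bounded in $L^p$, elliptic regularity makes $\widetilde W_k$ bounded in $W^{2,p}\hookrightarrow C^1$, and a subsequence converges in $C^1$ to $\widetilde W_\infty$ with $\|\tfrac1{2N}\ck\widetilde W_\infty\|_{C^0}=1$; in particular $\widetilde W_\infty\ne0$. Dividing the momentum constraint by $\gamma_k$ and letting $k\to\infty$ yields
\[
\div\!\left[\tfrac1{2N}\ck\widetilde W_\infty\right]
  =\alpha\,\tfrac{n-1}{n}\,\psi_\infty^{2n/(n-2)}\,d\tau
  =\alpha\sqrt{\tfrac{n-1}{n}}\,\bigl|\tfrac1{2N}\ck\widetilde W_\infty\bigr|\,\tfrac{d\tau}{|\tau|}.
\]
Since $\widetilde W_\infty\ne0$ the constant $\alpha$ cannot vanish (otherwise the left side is $0$ and the absence of conformal Killing fields forces $\widetilde W_\infty=0$), and $\tfrac{d\tau}{|\tau|}=\pm\,\dtt$ with a fixed sign that can be absorbed into $\widetilde W_\infty\mapsto-\widetilde W_\infty$; thus $\widetilde W_\infty$ solves \eqref{eq:limitEqn} for some $\alpha\in(0,1]$.

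The hard part will be the blow-up step. One needs the global sub- and supersolutions for \eqref{eqn:ham_conf} in precisely the scale-invariant form used above (uniform upper bound on $\psi_k$, control of the rescaled momentum source) and, more delicately, $C^0$-convergence of $\psi_k$ to a continuous limit satisfying the algebraic relation --- a singular perturbation (vanishing coefficient on $\Delta\psi_k$) that one resolves by squeezing $\psi_k$ between converging sub/supersolutions together with De~Giorgi/Harnack-type estimates; one also needs to promote weak to strong convergence of $\widetilde W_k$ so that the nonlinear source passes to the limit and $\ck\widetilde W_\infty\not\equiv0$. The two standing hypotheses enter exactly here: the constant nonvanishing sign of $\tau$ supplies both the defocusing structure of \eqref{eqn:ham_conf} (so the construction works with no sign condition on $R$, i.e.\ in every Yamabe class) and the clean balance $\tfrac{n-1}{n}\tau^2\psi^{(n+2)/(n-2)}\leftrightarrow|\tfrac1{2N}\ck\widetilde W|^2\psi^{-(3n-2)/(n-2)}$ that produces the constant $\sqrt{(n-1)/n}$; the absence of conformal Killing fields makes $\div[\tfrac1{2N}\ck\,\cdot\,]$ invertible, which is used for the fixed-point reformulation, for the degree being $1$ at $t=0$, and in the last step $\ck\widetilde W_\infty\ne0\Rightarrow\widetilde W_\infty\ne0$.
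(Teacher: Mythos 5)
First, a point of orientation: the paper does not prove Theorem~\ref{thm:limiteq} at all --- it is quoted from \cite{DGH10}, and the text offers only the one-sentence idea that loss of $L^\infty$ control on approximate solutions leads, after rescaling, to the blow-up profile \eqref{eq:limitEqn}. Your outline is a reasonable reconstruction of that strategy, and its skeleton matches the cited argument: manufacture solutions of a deformed system for every value of a deformation parameter, use the maximum-principle bound $\sup\varphi^{2n/(n-2)}\lesssim 1+\sup\bigl|\tfrac1{2N}\ck W\bigr|$ (this is exactly where the constant, nonvanishing sign of $\tau$ enters, uniformly in the Yamabe class), rescale by $\gamma_k=\|\tfrac1{2N}\ck W_k\|_{C^0}\to\infty$, recover $\widetilde W_\infty\ne 0$ from elliptic estimates and compact embedding, and exclude $\alpha=0$ via the absence of conformal Killing fields. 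The implementation differs in detail (\cite{DGH10} deforms the system via a subcritical/truncated family handled with a fixed-point theorem rather than your Leray--Schauder homotopy in the coupling strength), but that difference is cosmetic; your algebra producing the constant $\sqrt{(n-1)/n}$ and the sign absorption $\widetilde W_\infty\mapsto-\widetilde W_\infty$ are correct.

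The genuine gap is the step you yourself flag as ``the hard part,'' and it is not a routine verification: you need (i) strong (a.e.\ or $L^p$) convergence of the rescaled conformal factors $\psi_k$ to a limit satisfying the algebraic relation $\psi_\infty^{2n/(n-2)}=\sqrt{n/(n-1)}\,\bigl|\tfrac1{2N}\ck\widetilde W_\infty\bigr|/|\tau|$, and (ii) enough convergence of $\psi_k^{2n/(n-2)}$ to identify the distributional limit of the momentum source with $\alpha\,\tfrac{n-1}{n}\,\psi_\infty^{2n/(n-2)}d\tau$. Mere $C^0$ boundedness gives only weak-$*$ limits, and the weak limit of a nonlinear expression need not be the nonlinear expression of the limit; moreover the rescaled Lichnerowicz equation is a singular (vanishing-diffusion) limit in which the zero set of $\ck\widetilde W_\infty$ produces a $0\cdot\infty$ balance in the $\varphi^{-(3n-2)/(n-2)}$ term. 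Establishing this convergence is precisely the technical core of \cite{DGH10}; without it your limit equation is derived only formally, so as written the proposal is an accurate plan rather than a proof. A smaller, fixable wrinkle: when $\sigma\equiv0$ and $Y(g)\ge0$ the Lichnerowicz solution map is undefined at $(t,W)=(1,0)$, an interior point of the ball on which homotopy invariance of the degree is invoked, so the $(1-t)$ regularization should be decoupled from the homotopy parameter (run the homotopy at a fixed regularization $\epsilon>0$, then send $\epsilon\to0$ and perform the blow-up analysis in that second limit).
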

See also \cite{GS12,GICQUAUD2016175}, where
Theorem \ref{thm:limiteq} has been extended to other settings.
The main idea behind the proof of the theorem is that if one cannot
maintain $L^\infty$ control on approximate solutions $\phi$ of
the conformally parameterized constraint equations, then rescalings
of the approximations eventually lead to a solution
of the blowup profile \eqref{eq:limitEqn}.
One potential application of Theorem \ref{thm:limiteq} is to show
solutions exist by ruling out the possibility of solutions
of the limit equation \eqref{eq:limitEqn}, and indeed
\cite{DGH10} contains a near-CMC existence theorem with a large
perturbation constant based on this idea. 
Although Theorem \ref{thm:limiteq} has proved difficult to apply
in practice,
because of the challenge of working with 
equation \eqref{eq:limitEqn}, our numerical work suggests that
it plays a decisive role in analyzing system 
\eqref{eqn:ham_conf}-\eqref{eqn:mom_conf} for constant sign mean curvatures.

For mean curvatures that change sign, little is known in general aside
from Theorem \ref{thm:smallTT}.  However, \cite{M11} contains
an analysis of some very specific families of seed data
on the flat torus that includes the sign changing case.
Among the seed data considered there is a family 
of mean curvatures of the form
\[
\tau = 1 + a\xi
\]
where $\xi$ is a particular dicontinuous, piecewise constant function equal
to $\pm 1$ and TT tensors of the form $\eta\hat \sigma$ 
for a particular reference TT tensor $\hat \sigma$ and 
an arbitrary constant $\eta$.
\begin{theorem}[\cite{M11}]\label{thm:modelp}
For particular seed data on the flat torus $T^3$ of the form
\[
(\tau=1+a\xi,\sigma=\eta\hat \sigma, N),
\]
if $|a|>1$ (and hence $\tau$ changes sign), 
there is an $\eta_*>0$ depending on $a$ so that if 
$\eta>\eta_*$ there is no solution of 
system \eqref{eqn:ham_conf}-\eqref{eqn:mom_conf} with the
symmetry of the data, but if $0<\eta<\eta_*$ then there 
are at least two solutions.
\end{theorem}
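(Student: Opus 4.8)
The plan is to exploit the large symmetry group of the seed data to reduce the coupled system \eqref{eqn:ham_conf}--\eqref{eqn:mom_conf} to a scalar ordinary differential equation, and then to carry out a bifurcation analysis of that equation in the parameter $\eta$. The seed data is invariant under a group $G$ generated by translations in two of the periodic directions together with reflections in the third; writing the third coordinate as $x\in\mathbb{R}/\mathbb{Z}$, a $G$-invariant conformal factor is a function $\varphi=\varphi(x)$, while a $G$-invariant $W$ reduces --- after discarding the components of $W$ that satisfy homogeneous equations and so may be taken to vanish --- to a single scalar potential. One arranges (this is what forces the ``particular'' choices of $\xi$, $\hat\sigma$ and $N$) that the tensor $\tfrac{1}{2N}\ck W$ produced by the reduced momentum operator is pointwise orthogonal to $\hat\sigma$, so that $\bigl|\sigma+\tfrac{1}{2N}\ck W\bigr|^{2}=\eta^{2}|\hat\sigma|^{2}+\bigl|\tfrac{1}{2N}\ck W\bigr|^{2}$.

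Since $\xi$ is piecewise constant, $d\tau=a\,d\xi$ is a finite sum of Dirac masses at the interfaces, so integrating the reduced momentum equation shows that $\ck W$ is \emph{piecewise constant}: its jump across each interface is a fixed multiple of $\varphi^{6}$ evaluated there, and the single remaining integration constant is fixed by periodicity of $W$. Consistency of these jumps forces $\varphi$ to take a common value $\varphi_{*}>0$ at the interfaces, and then $\bigl|\sigma+\tfrac{1}{2N}\ck W\bigr|^{2}$ is an explicit piecewise-constant function whose values $K=K(\varphi_{*},\eta)$ increase strictly with $\eta^{2}$ and with $\varphi_{*}^{12}$. Substituting into the Hamiltonian constraint, on each subinterval where $\tau$ is constant $\varphi$ solves the Lichnerowicz-type ODE
\[
-8\,\varphi''+\tfrac{2}{3}\,\tau^{2}\varphi^{5}=K(\varphi_{*},\eta)\,\varphi^{-7},
\]
with Neumann data at the centre of the subinterval (from the reflection symmetry) and Dirichlet datum $\varphi_{*}$ at the interfaces, and with $\varphi\in C^{1}$ across each interface since the Hamiltonian constraint carries no distributional source. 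The first integral of this autonomous equation shows that $\varphi$ is monotone on each subinterval, that the ``equilibria'' $\bigl(\tfrac{3K}{2(1\pm a)^{2}}\bigr)^{1/12}$ act as barriers, and --- on combining the phase portraits on the two subintervals with the $C^{1}$-matching condition --- that any symmetric solution must have $\varphi$ increasing from its minimum at the centre of the region where $|\tau|$ is largest to its maximum at the centre of the region where $|\tau|$ is smallest, with $\varphi_{*}$ lying strictly between the two equilibria. Thus symmetric solutions correspond precisely to the positive zeros $\varphi_{*}$ of the matching function $\mathcal{F}(\varphi_{*},\eta)$ measuring the jump in $\varphi'$ at an interface when the two interval problems are solved with interface datum $\varphi_{*}$.

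The bifurcation analysis rests on two estimates. First, the barrier inequality $\varphi_{*}>\bigl(\tfrac{3K}{2(1+a)^{2}}\bigr)^{1/12}$ (say $a>0$) together with the explicit form of $K$ gives a lower bound $\varphi_{*}^{12}\gtrsim\eta^{2}$; feeding this into the requirement that $\varphi$ traverse the interval between the two equilibria in the fixed time $\tfrac12$ produces a contradiction once $\eta$ is large. Morally this is the regime of the limit equation \eqref{eq:limitEqn}, which in the present $G$-symmetric setting admits no solution, so a large transverse-traceless part forces a putative solution to blow up; hence $\eta_{*}:=\sup\{\eta:\text{a symmetric solution exists}\}<\infty$. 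Second, for $\eta$ small an explicit constant sub-/super-solution pair for the interval problems sandwiches a solution, so $\eta_{*}>0$, and for $0<\eta<\eta_{*}$ monotone iteration from the sub-solution produces a minimal solution. A second solution for every $0<\eta<\eta_{*}$ then follows from a topological-degree argument (Leray--Schauder, or Rabinowitz global continuation), or from a mountain-pass argument for the energy functional of the reduced scalar equation: the solution set is nonempty and, by the a priori bounds, uniformly bounded on compact subsets of $(0,\eta_{*})$, yet becomes empty at $\eta_{*}$, so the branch must fold.

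The main obstacle is the a priori estimate underlying nonexistence for large $\eta$: one must quantify the nonlinear feedback whereby $\ck W$ grows like $\varphi^{6}$ --- so that a large $\eta$ cannot be absorbed merely by rescaling $\varphi$ --- and convert it into a bound on the phase-plane traversal times that fails once $\eta>\eta_{*}$. The other delicate point is running the degree (or continuation) argument carefully enough to obtain \emph{two} solutions, rather than just one, throughout $(0,\eta_{*})$, which requires the uniform a priori bounds on compact parameter ranges together with nondegeneracy, or strict local minimality, of the minimal solution.
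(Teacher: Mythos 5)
Note first that the paper you were given does not itself prove Theorem \ref{thm:modelp}; the result is quoted from \cite{M11}, so the comparison is with the proof there. Your reduction is essentially the one used in \cite{M11}: restrict to data and solutions with the symmetry of the seed data, integrate the momentum constraint against the Dirac masses in $d\tau$ to conclude that $w'$ is piecewise constant with jumps proportional to $\varphi^{6}$ at the interfaces, use periodicity to force a common interface value $\varphi_{*}$, and use the pointwise orthogonality of $\hat\sigma$ to $\ck W$ so that on each half the Hamiltonian constraint becomes an autonomous Lichnerowicz ODE whose coefficient $K(\varphi_{*},\eta)$ sees the unknown only through $\varphi_{*}$. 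Up to the formulation of a matching problem in the single scalar $\varphi_{*}$, you are on the same track as \cite{M11}.

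The gaps are in the two counting steps, which is where the content of the theorem lies. (i) Nonexistence for large $\eta$: the ingredients you list (the barrier bound $\varphi_{*}^{12}\gtrsim\eta^{2}$ and a traversal-time estimate over an interval of fixed length) nowhere use $|a|>1$; they apply verbatim to $\tfrac12<a<1$, where $\tau$ has constant sign and the analysis of \cite{M11} gives a (unique) solution for every $\eta>0$ (consistent with the smooth analogue in Figure \ref{fig:S1-Yzero-mult}). So the sketch would prove too much; a correct argument must isolate where the sign change enters, and in \cite{M11} it does so through the explicit shape and asymptotics of the matching relation in $\varphi_{*}$ (schematically $\eta^{2}=H(\varphi_{*})$), whose supremum defines $\eta_{*}$. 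Your ``morally this is the limit equation'' aside is also backwards: by Theorem \ref{thm:limiteq}, nonexistence of constraint solutions implies existence of a limit-equation solution, so the absence of limit-equation solutions (as in Theorem \ref{thm:noLimitSolns}) is evidence \emph{for} existence, not for blowup --- and that criterion assumes constant-sign $\tau$ in any case. (ii) At least two solutions for every $\eta\in(0,\eta_{*})$, with the same $\eta_{*}$ as the nonexistence threshold: the degree/mountain-pass/fold argument you invoke is not available as stated, because the reduced problem is nonlocal --- $K$ depends on the boundary value $\varphi_{*}$ of the unknown --- so there is no standard energy functional, monotone iteration from constant sub/supersolutions does not directly apply, and even granting a priori bounds a fold argument yields two solutions only near the fold; it neither shows the existence set is an interval nor gives multiplicity throughout $(0,\eta_{*})$. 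In \cite{M11} both conclusions come simultaneously from the finite-dimensional reduction: zeros of the explicit scalar relation in $\varphi_{*}$ are counted directly (no roots for $\eta>\eta_{*}$, at least two for $0<\eta<\eta_{*}$), and this is precisely the step, flagged but not carried out in your proposal, where $|a|>1$ is used.
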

This was the first theorem to demonstrate the existence
of multiple solutions of the vacuum 
conformal method in the far-from-CMC setting. 
Although in involves Yamabe-null examples, it cast
doubt on the possibility that Theorem \ref{thm:smallTT}
concerning Yamabe-positive seed data
could be extended to include a uniqueness statement or that
its small-TT tensor requirement could be dropped.  The
follow-up study \cite{Maxwell:2014b} contains
additional results on related families of far-from-CMC data.

Using ideas from \cite{DGH10}, Nguyen \cite{Nguyen:2015}
recently showed conclusively that the restrictions of 
Theorem \ref{thm:smallTT} are essential.
\begin{theorem}[\cite{Nguyen:2015}] \label{thm:Nguyen}
Let $(M,g)$ be a compact Yamabe-positive manifold $(M,g)$ 
with no conformal Killing fields.  Consider a family of seed data
$(\tau = \xi^a, \mu\sigma,N)$  with $a>0$, $\mu\in \R$, where $\xi$
is a fixed positive function.  Assume additionally $\xi$ satisfies
\begin{equation}\label{eq:almostckf}
\left|\mathcal{L}\left(\dxx\right)\right| \leq c \left|\dxx\right|^2
\end{equation}
for some $c>0$, and that $\sigma$ is supported away from the critical
points of $\tau$. Then if $a$ is sufficiently large, 
and if $|\mu|$ is larger than a 
threshold depending on $a$,
the conformally parameterized constraint equations 
\eqref{eqn:ham_conf}-\eqref{eqn:mom_conf} do not admit a solution.
For the same $a$, there is a sequence $\mu_k\to 0$
such that there are at least two solutions to these equations along the sequence, 
and such that there is a solution with $\mu =0$. 

The set of seed data satisfying these conditions is nonempty.
\end{theorem}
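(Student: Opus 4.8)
The plan is to prove the three assertions of the theorem in turn: nonexistence once $a$ and $|\mu|$ are large; the existence of a solution at $\mu=0$ and of two solutions along a sequence $\mu_k\to0$; and nonemptiness of the admissible class of data. Throughout, let $V$ be the vector field metrically dual to $\dxx=\frac{d\xi}{\xi}=\nabla\log\xi$, so that $\dtt=a\,\dxx$ and~\eqref{eq:almostckf} says $|\calL V|\le c|V|^2$, i.e.\ $V$ is ``$c$-approximately conformal Killing''; the hypothesis on $\sigma$ says $\sigma$ vanishes near the zero set of $V$, which is exactly the critical set of $\tau$.

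\emph{Nonexistence.} Suppose, for a fixed $a$ (to be restricted below), that solutions $(\varphi_\mu,W_\mu)$ of~\eqref{eqn:ham_conf}--\eqref{eqn:mom_conf} exist for a sequence $|\mu|\to\infty$. I would first show $\|\varphi_\mu\|_\infty\to\infty$: integrating~\eqref{eqn:ham_conf} annihilates the Laplacian and bounds $\int_M|\sigma+\tfrac{1}{2N}\calL W_\mu|^2\varphi_\mu^{-7}$ by $C(1+\|\varphi_\mu\|_\infty^5)$, while the momentum constraint~\eqref{eqn:mom_conf} together with the absence of conformal Killing fields — which makes $W\mapsto\div[\tfrac{1}{2N}\calL W]$ invertible — gives $\|\calL W_\mu\|_{L^2}\lesssim\|\varphi_\mu\|_\infty^6$; a bounded $\|\varphi_\mu\|_\infty$ would then bound $\|\sigma\|_{L^2}$, hence $|\mu|$, a contradiction. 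Given blow-up, I would rerun the rescaling analysis of~\cite{DGH10} that underlies Theorem~\ref{thm:limiteq}: rescaling around a concentration point produces in the limit a nontrivial $\bar A=\tfrac{1}{2N}\calL\bar W$ solving the limit equation~\eqref{eq:limitEqn}; and — the key point — because the seed data is held fixed along the sequence and $\tau=\xi^a$ is bounded away from zero on the compact $M$, the concentration is nondegenerate and~\eqref{eq:limitEqn} holds with $\alpha=1$ (it would suffice to know $\alpha$ is bounded below by a fixed positive constant). I would then test~\eqref{eq:limitEqn} against $V$: since $\bar A$ is symmetric and trace-free, $\int_M\langle\div\bar A,V\rangle=-\tfrac12\int_M\langle\bar A,\calL V\rangle$, so the left side of~\eqref{eq:limitEqn} paired with $V$ has modulus at most $\tfrac{c}{2}\int_M|\bar A|\,|V|^2$ by~\eqref{eq:almostckf}, whereas the right side paired with $V$ equals $\sqrt{\tfrac{n-1}{n}}\,a\int_M|\bar A|\,|V|^2$. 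Here $\int_M|\bar A|\,|V|^2>0$: otherwise $\bar A$ is supported on $\{V=0\}=\{d\tau=0\}$, so $\div\bar A\equiv0$, which forces $\bar W$ conformal Killing and hence $\bar A\equiv0$. Dividing, $a\le\tfrac{c}{2}\sqrt{n/(n-1)}$. Thus once $a>\tfrac{c}{2}\sqrt{n/(n-1)}$ the set of $\mu$ admitting a solution is bounded, and its supremum is the threshold in the statement. I expect the step ``$\alpha=1$'' — that concentration loses no efficiency when the mean curvature is held fixed and bounded away from zero — to be the main obstacle, since it requires tracking the DGH construction quantitatively rather than quoting Theorem~\ref{thm:limiteq} as a black box.

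\emph{Solutions at and near $\mu=0$.} For $|\mu|$ small, Theorem~\ref{thm:smallTT} applies (Yamabe-positive, no conformal Killing fields, $\sigma=\mu\hat\sigma$ small) and gives a solution, with $L^\infty$ bounds on this ``small branch'' that are uniform as $\mu\to0$; a subsequential limit then produces a solution at $\mu=0$. For a \emph{second} solution near $\mu=0$ I would pose~\eqref{eqn:ham_conf}--\eqref{eqn:mom_conf} as a compact fixed-point problem (solve~\eqref{eqn:mom_conf} for $W=W(\varphi)$, substitute into the Lichnerowicz operator, use elliptic regularity for compactness) and follow the Leray--Schauder degree as $\mu$ increases. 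Since the degree is nonzero near $\mu=0$ and vanishes once $\mu$ exceeds the threshold above, a priori bounds must fail somewhere; by the analysis of the first part such failure forces the limit equation to be solvable at the corresponding scaling, and for this $\xi$ that is possible only for scalings built from $V$ via~\eqref{eq:almostckf}. Tracking these, one obtains a ``large'' solution branch reappearing along a sequence $\mu_k\to0$, each $\mu_k$ then carrying at least the small solution and a large one. The obstacle here is to run the degree/continuation cleanly given that nonexistence is established only for large $|\mu|$, so that two genuinely distinct solutions — rather than one counted twice — are produced along the sequence.

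\emph{Nonemptiness.} I would exhibit an explicit model. Take $M$ diffeomorphic to $S^n$ with marked points $p_\pm$; near $p_\pm$ let $g$ be a round metric and $\xi$ a radial function $\xi=h(r)$, $r=\operatorname{dist}(\cdot,p_\pm)$, with $h'(0)=0$ and $h''(0)\ne0$. In geodesic normal coordinates $g=\delta+O(r^2)$, so $\mathrm{Hess}\,\xi=h''(0)\,g+O(r^2)$ near $p_\pm$; hence the trace-free part of $\mathrm{Hess}\,\xi$ is $O(r^2)$ while $|V|^2\gtrsim r^2$, giving $|\calL V|\le\tfrac{2}{\xi}\big|(\mathrm{Hess}\,\xi)_{\mathrm{tf}}\big|+2|V|^2\le c_1|V|^2$ there, and away from $p_\pm$ one has $|V|\ge\delta>0$ with $|\calL V|$ bounded, so~\eqref{eq:almostckf} holds. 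Extend $g$ over the rest of $M$ to a Yamabe-positive metric with no conformal Killing fields (generic, hence attainable), extend $\xi>0$ with no further critical points, and take $\hat\sigma$ any nonzero TT tensor supported away from $p_\pm$ — arranged by choosing $M$ to carry such a tensor. This meets all hypotheses, with only routine verifications remaining, so this last part presents no serious obstacle.
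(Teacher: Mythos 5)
First, a structural point: Theorem \ref{thm:Nguyen} is quoted here from \cite{Nguyen:2015}; this paper contains no proof of it, only a description of the mechanism (blow-up/limit-equation ideas from \cite{DGH10}, with the two solutions for small $\mu$ obtained by perturbing off the zero-volume solution \emph{and} off a genuine solution at $\mu=0$, cf.\ the discussion around Figure \ref{fig:LatDep-folds}). So your proposal can only be measured against that summary and the cited work, and against it there are two genuine gaps. For the nonexistence part, your reduction to the limit equation needs exactly the two ingredients you do not supply. (i) Theorem \ref{thm:limiteq} and the underlying rescaling analysis are formulated for a \emph{fixed} seed data set; along your sequence the data is not fixed, since $\sigma=\mu\hat\sigma$ with $|\mu|\to\infty$ (your parenthetical ``the seed data is held fixed along the sequence'' contradicts this), so the $\sigma$-term in the rescaled Hamiltonian constraint need not become negligible and the limit profile need not satisfy \eqref{eq:limitEqn} at all. (ii) Even granting a limit solution, your pairing with $V$ yields $\alpha\, a\,\sqrt{(n-1)/n}\le c/2$, which gives no contradiction for any fixed $a$ unless $\alpha$ is bounded below; the theorem only provides $\alpha\in(0,1]$, and the claim $\alpha=1$ is precisely the quantitative content that would have to be proved. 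In \cite{Nguyen:2015} the large-$|\mu|$ nonexistence is obtained by direct integral estimates on \eqref{eqn:ham_conf}--\eqref{eqn:mom_conf} using the vector field dual to $d\xi/\xi$ and hypothesis \eqref{eq:almostckf}, for each fixed datum, rather than by a compactness argument along a sequence of data.

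The second gap is the $\mu=0$ solution and the two solutions along $\mu_k\to 0$. A subsequential limit of the small-TT branch of Theorem \ref{thm:smallTT} as $\mu\to0$ does not produce a nontrivial solution at $\mu=0$: for Yamabe-positive data the expected (and, numerically, observed -- see Figure \ref{fig:LatDep-folds}, left) limit of that branch is the degenerate zero-volume solution, consistent with the CMC/near-CMC prohibition of $\sigma\equiv0$ solutions in Tables \ref{table:cmcSolv}--\ref{table:conjSolv}; you give no lower bound keeping $\varphi$ away from zero, and none is available on that branch. In \cite{Nguyen:2015} the nontrivial $\sigma\equiv0$ solution arises from a different mechanism (roughly: nonexistence for the given data together with nonexistence of an $\alpha=1$ solution of \eqref{eq:limitEqn} forces a $\sigma\equiv0$ solution), and the multiplicity for small $\mu$ is then obtained by perturbing off both the zero-volume solution and this nontrivial $\mu=0$ solution -- not by a global Leray--Schauder continuation. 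As written, your degree argument is also not yet a proof: no admissible bounded set with boundary a priori bounds is exhibited, and no reason is given that the ``large'' branch returns along a sequence $\mu_k\to0$ rather than disappearing at some positive $\mu$. The nonemptiness construction, by contrast, is essentially fine in outline and close in spirit to the examples in the cited work.
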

The restrictions on the seed data in Theorem
\ref{thm:Nguyen} are quite severe, but the result
is remarkable nevertheless.  In particular, the existence
of solutions at $\sigma\equiv0$  for Yamabe-positive data
was a surprise.  In addition to proving Theorem \ref{thm:Nguyen}, 
\cite{Nguyen:2015} gives insight into the role
of the deficiency parameter $\alpha$ in the limit equation 
\eqref{eq:limitEqn} and these unexpected $\sigma\equiv 0$
solutions. On a Yamabe-positive manifold, if there 
does not exist a solution for given seed data, and if there
does not exist a solution of the limit equation with $\alpha=1$,
then there is, in fact, a solution for the same seed data but with
$\sigma\equiv 0$.

Very recently, Nguyen obtained the following extension of Theorem \ref{thm:Nguyen}.
\begin{theorem}[\cite{tcN2018}]\label{thm:ngn2}  
Let $(M,g)$ be a compact Yamabe-positive manifold $(M,g)$ 
with no conformal Killing fields.  Consider a family of seed data
$(\tau = \xi^a, \mu\sigma,N)$  with $a>0$, $\mu\in \R$, and where $\xi$
is a fixed positive function satisfying inequality \eqref{eq:almostckf}.
Then, for each $a$ sufficiently large, there is a $m_0$ depending on 
$a$ such that if $0<\mu<m_0$ there are at least two solutions 
of the conformally parameterized constraint equations, and
when $\mu=0$ there is at least one.
\end{theorem}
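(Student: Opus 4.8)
The plan is to reduce the coupled system \eqref{eqn:ham_conf}--\eqref{eqn:mom_conf} to a single elliptic fixed-point problem for the vector potential $W$, to recognize one of the two solutions for $0<\mu<m_0$ as the ``small-TT'' solution already supplied by Theorem~\ref{thm:smallTT}, to produce the second, ``large'' solution — together with the $\mu=0$ solution — out of the blow-up dichotomy of Theorem~\ref{thm:limiteq} and the structural hypothesis \eqref{eq:almostckf}, and finally to organize everything into a branch picture with a fold at $m_0$. The two points where genuinely new work is needed, compared with Theorem~\ref{thm:Nguyen}, are the passage from a sequence $\mu_k\to0$ to the whole interval $(0,m_0)$ and the removal of the hypothesis that $\sigma$ be supported away from the critical points of $\tau$.

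First I would normalize. After a conformal change of $g$ — permissible since $Y(g)>0$ and $(M,g)$ has no conformal Killing fields — assume $R_g>0$. For fixed $W$ put $\mathcal{A}_\mu(W)=\mu\sigma+\tfrac{1}{2N}\ck W$. Because $\tau=\xi^a>0$ and $R_g>0$, the Lichnerowicz equation \eqref{eqn:ham_conf} with momentum source $\mathcal{A}_\mu(W)$ has a unique positive solution $\varphi=\Phi_\mu(W)$ whenever $\mathcal{A}_\mu(W)\not\equiv0$: existence by the global sub/supersolution pair ($\varepsilon\ll1$ and a large constant), uniqueness by monotonicity of $\varphi\mapsto R_g\varphi+\tfrac23\tau^2\varphi^5-\abs{\mathcal{A}}^2\varphi^{-7}$. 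Here $\mathcal{A}_\mu(W)\not\equiv0$ automatically for $\mu>0$, and for $\mu=0$ exactly when $W$ is not a conformal Killing field, i.e.\ when $W\ne0$ — consistent with Table~\ref{table:cmcSolv}, which forbids $\mathcal{A}\equiv0$ in this case. Substituting $\varphi=\Phi_\mu(W)$ into \eqref{eqn:mom_conf} leaves
\[
-\div\!\left[\tfrac{1}{2N}\ck W\right]+\tfrac23\,\Phi_\mu(W)^6\,d\tau=0,
\]
a compact fixed-point problem $W=F_\mu(W)$ in, say, $W^{2,p}$, whose solutions are in bijection with solutions of the full system.

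For $\mu>0$ small, Theorem~\ref{thm:smallTT} applies to the seed data $(\xi^a,\mu\sigma,N)$ and produces a solution $W_\mu^{\mathrm{sm}}$ with $\varphi$ and $W$ bounded uniformly as $\mu\downarrow0$; constructed as a minimal solution by monotone iteration it carries Leray--Schauder index $+1$, and it need not survive to $\mu=0$. To obtain the $\mu=0$ solution and a second solution for $0<\mu<m_0$, apply Theorem~\ref{thm:limiteq} to the vacuum data $(\xi^a,\mu\sigma,N)$: either \eqref{eqn:ham_conf}--\eqref{eqn:mom_conf} has a solution, or the limit equation \eqref{eq:limitEqn} has a nontrivial solution for some $\alpha\in(0,1]$. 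With $\tau=\xi^a$ the latter reads $\div[\tfrac{1}{2N}\ck W]=a\,\alpha\sqrt{\tfrac{n-1}{n}}\,\abs{\tfrac{1}{2N}\ck W}\,\dxx$, and \eqref{eq:almostckf} says precisely that $\dxx$ is an \emph{almost} conformal-Killing covector, its image under $\ck$ being controlled by $\abs{\dxx}^2$. A suitable integral identity then forces, when $a$ is large, every solution of \eqref{eq:limitEqn} with $\alpha<1$ to degenerate into a genuine conformal Killing field — impossible — while the borderline $\alpha=1$ is the analytic core and is disposed of via the finer understanding of the deficiency parameter from \cite{Nguyen:2015}. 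This yields a solution for every $\mu\in[0,m_0)$, in particular the vacuum $\sigma\equiv0$ solution $W_0$; since $W_0$ has $\varphi$ ``large'' it is distinct from $W_\mu^{\mathrm{sm}}$, and it continues to $0<\mu<m_0$ — by the implicit function theorem at nondegenerate points, or, for the whole interval and without nondegeneracy, by a degree argument on a large ball in $W$-space separating it from the small-TT branch and comparing local indices. The threshold $m_0$ is the first $\mu$ at which the two branches collide, a saddle--node (fold), which is why one solution is lost for $\mu\ge m_0$ and why one expects a turning point in the numerics of this paper.

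The main obstacle is the limit-equation analysis at the borderline $\alpha=1$: inequality \eqref{eq:almostckf} with $a$ large disposes of $\alpha\in(0,1)$, but $\alpha=1$ sits exactly at the threshold and requires the delicate relationship — following \cite{Nguyen:2015} — between $\alpha=1$ solutions of \eqref{eq:limitEqn} and the existence of $\sigma\equiv0$ solutions of the constraints. A secondary difficulty, and the reason the support hypothesis of Theorem~\ref{thm:Nguyen} can be removed, is that Theorem~\ref{thm:ngn2} asserts no nonexistence statement — and it was exactly for nonexistence that ``$\supp\sigma$ away from the critical points of $\tau$'' was used, via barriers concentrated there; but the degree/continuation argument must still be carried out uniformly for $\mu\in(0,m_0)$ near the critical points of $\xi$, where $d\tau$ degenerates, which calls for weighted estimates on the set $\{d\xi=0\}$.
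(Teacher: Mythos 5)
A preliminary remark: this paper only quotes Theorem \ref{thm:ngn2} from \cite{tcN2018} and contains no proof of it, so there is nothing in-paper to compare your argument with; I can only judge the proposal on its own terms, and it has a genuine gap at its core.

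The central step runs backwards. You propose to rule out, for $a$ large, all nontrivial solutions of the limit equation \eqref{eq:limitEqn} — for $\alpha<1$ by an unspecified integral identity exploiting \eqref{eq:almostckf}, and at $\alpha=1$ by deferring to \cite{Nguyen:2015} — and then to feed this into the dichotomy of Theorem \ref{thm:limiteq} to obtain existence for every $\mu\in[0,m_0)$, in particular the $\sigma\equiv0$ solution. But inequality \eqref{eq:almostckf} says that $\xi^{-1}d\xi$ is \emph{nearly} conformal Killing, and this is precisely what makes the limit equation admit solutions once $a$ is large: it is the engine of the nonexistence half of Theorem \ref{thm:Nguyen}. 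Since the limit equation involves only $(N,\tau)$ and neither $\sigma$ nor $\mu$, if it had no solution for your data then Theorem \ref{thm:limiteq} would yield solvability of \eqref{eqn:ham_conf}--\eqref{eqn:mom_conf} for \emph{every} $\mu$, contradicting the nonexistence for large $|\mu|$ in Theorem \ref{thm:Nguyen} on the (nonempty) class of seed data satisfying both sets of hypotheses. So the key hypothesis is used with the wrong sign, and the dichotomy alone cannot produce existence on a $\mu$-interval: either the limit equation has no solution (a conclusion too strong to be true here) or it has one and the dichotomy is silent. The real content — which the surrounding discussion in this paper also points to — is the finer relation between $\alpha=1$ limit solutions and the existence of $\sigma\equiv0$ solutions, i.e.\ exactly the part you defer wholesale to \cite{Nguyen:2015}; that, together with upgrading Nguyen's sequence $\mu_k\to0$ to the full interval $(0,m_0)$, is where the new work of \cite{tcN2018} must lie, and your sketch does not supply it.

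The branch bookkeeping is also asserted rather than proved. Theorem \ref{thm:smallTT} gives existence of \emph{a} solution for small $\mu$, not a minimal solution of Leray--Schauder index $+1$; the coupled system \eqref{eqn:ham_conf}--\eqref{eqn:mom_conf} is neither variational nor order-preserving once the momentum constraint is included, so the index claim, the separation of the ``small'' and ``large'' branches by a ball in $W$-space, and the continuation of the second solution over all of $(0,m_0)$ terminating in a fold at $m_0$ are all unsubstantiated; nothing in the sketch prevents the putative second branch from disappearing at some $\mu<m_0$ or from merging with the first without the degree count you invoke. As it stands, even granting a $\sigma\equiv0$ solution, the conclusion ``at least two solutions for all $0<\mu<m_0$'' is not established.
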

While Theorem \ref{thm:ngn2} significantly relaxes many of the hypotheses
of Theorem \ref{thm:Nguyen} and strengthens some of its conclusions, 
it makes no claims concerning non-existence for large $\mu$, 
a point we will revisit in our numerical experiments. 
Inequality \eqref{eq:almostckf} needed for Theorems \ref{thm:Nguyen}
and \ref{thm:ngn2} is non-generic, and our numerical work gives
insight about the extent to which Theorem \ref{thm:ngn2} holds more
generally.

In summary, we have the following.
\begin{itemize}
    \item On Yamabe-positive manifolds, arbitrary mean curvatures
    can be used for seed data, so long as the TT tensor is small enough.
    For certain families of Yamabe-positive seed data, there are 
    multiple solutions for small TT tensors. Moreover, for these families
    of seed data, there exist
    solutions at $\sigma\equiv0$, something ruled out in
    the near-CMC case (Table \ref{table:conjSolv}).  Additionally,
    there are some cases where there are no solutions for large
    TT tensors.
    \item On a particular Yamabe-null manifold, with particular 
    sign-changing seed data, we have multiple solutions for small TT tensors,
    and nonexistence (within the symmetry class) for large TT tensors.
    \item Nothing specific is known for Yamabe-negative seed data.
    \item The limit equation criterion holds for all Yamabe classes,
    but the question of the existence of solutions of the limit
    equation is essentially open.
\end{itemize}
These limited results provide
our motivation to look at analytic 
bifurcation theory and closely related numerical continuation methods to try 
to gain intuition for what can be expected more generally
 from the conformal method in the far-from CMC regime.

\section{Tools from Bifurcation Analysis}

The unexpectedly complex behavior
of solutions to the conformal method equations in the far-from-CMC regime
leads one to the language and technical tools of 
\emph{analytic bifurcation theory}.
This area of nonlinear analysis is the study of the branching of solutions 
of nonlinear problems with respect to the parameters.
\emph{Numerical continuation} (or \emph{numerical homotopy methods})
is a related area and refers to a collection 
of practical numerical methods for computing solution branches of nonlinear 
problems through critical points such as folds and bifurcations.

\subsection{Analytic Bifurcation Theory}
   \label{sec:bifur-analytic}

To explain the main ideas that are relevant here,
consider again the PDE representation of the 
conformal method~\eqref{eqn:ham_conf}--\eqref{eqn:mom_conf},
but written more simply 
as the abstract nonlinear problem: Find $u \in X$ such that
\begin{align}
F(u,\lambda) &= 0, 
\label{eq:fulam}
\end{align}
where $F\colon X \times Z \to Y$
for suitably chosen Banach spaces $X$, $Y$, and $Z$,
and where $\lambda \in Z$ represents the parameters of interest
that are moved through the parameter space $Z$.
In the setting of the conformal method and its variations, 
one can consider various parameterizations.
Once a parameterization $\lambda$ is chosen,
one is interested in the \emph{local behavior} of the solution curve 
$u(\lambda)$ in a neighborhood of a known solution $u_0(\lambda_0)$.
The techniques of both analytic and numerical bifurcation analysis
rely on the \emph{Implicit Function Theorem (IFT)} as the basic tool for
doing this exploration.

\emph{
Given $F:X\times Z\rightarrow Y$, 
where $X$, $Y$, and $Z$ are Banach spaces,
if $F(u_0,\lambda_0)=0$,
if $F$ and $F_u$ (the Frech\'{e}t derivative of $F$) are continuous on some 
region $U\times V \subset X \times \mathbb{R}$ containing $(u_0,\lambda_0)$,
and if $F_u(u_0,\lambda_0)$ is nonsingular with a bounded inverse, 
then there is a unique branch of solutions $(u(\lambda),\lambda))$ 
to $F(u(\lambda),\lambda)=0$ for $\lambda\in V$.
Moreover, $u(\lambda)$ is continuous with respect to $\lambda$ in $V$.
}

The IFT effectively states that if the linearization $F_u$ of the nonlinear 
operator operator $F$ is nonsingular at the point $[u_0,\lambda_0]$,
then there is a unique solution $u(\lambda)$ for each $\lambda$ in a ball
around $\lambda_0$. More details about this theorem 
and its proof can be found in \cite{HK04,EZ86,CH82}.

If $F_u$ is singular, however, the proof of the IFT fails, suggesting the 
possibility of two or more $u(\lambda)$ branches, or no solutions, 
for some $\lambda$ in every neighborhood around $\lambda_0$.
The form of the branching depends on the structure of the subspaces 
associated with the linear maps 
$F_u(u_0,\lambda_0)$ and $F_\lambda(u_0,\lambda_0)$.
In the case of a ``fold'', there is a one-dimensional path through 
$[u_0,\lambda_0]$; in the case of a simple (or more general) singular point, 
there is the possibility of branch-switching, with two (or more) branches 
of solutions crossing through $[u_0,\lambda_0]$.

One of the central tools in analytic bifurcation theory is 
\emph{Lyapunov-Schmidt Reduction}~\cite{Zeid91a}.
To explain, assume
$F_u(u_0,\lambda_0)$ is a Fredholm operator of index $k$,
and that $\text{dim}(\mathcal{N}(F_u(u_0,\lambda_0)))=n$.
Define now projection operators $P:X\rightarrow X$ and $Q:X \rightarrow X$ 
with $P(X)=\mathcal{N}(F_u(u_0,\lambda_0))$ 
and $(I-Q)(Y)=\mathcal{R}(F_u(u_0,\lambda_0))$. 
Equation~\eqref{eq:fulam} is equivalent to the pair
\begin{align}
(I-Q)F(y+z,\lambda)&=0, \label{eq:LSa} \\
QF(y+z,\lambda)&=0, \label{eq:LSb}
\end{align}
with $y=(I-P)u$ and $z=Pu$. 
Equation~\eqref{eq:LSa} satisfies the assumptions of the IFT, and so one 
obtains a unique solution branch $y(z,\lambda)$, then substitutes
into~\eqref{eq:LSb} to obtain the \emph{branching equation}:
\begin{equation}\label{eq:branch}
QF(y(z,\lambda)+z,\lambda)=0.
\end{equation}
One then solves for $z(\lambda)$ to get the branch 
$u=y(z(\lambda),\lambda)+z(\lambda)$.
In practice, one solves \eqref{eq:LSa}--\eqref{eq:LSb} by expanding the 
operators in bases of $\mathcal{N}(F_u(u_0,\lambda_0))$ and 
$\mathcal{N}(F_u(u_0,\lambda_0)^*)$. 
A more detailed description of the application of 
Lyapunov-Schmidt Reduction to variations of the conformal
method may be found in~\cite{HoMe12a,ChGi15}.
One of our goals here is to apply the reduction technique to the 
far-from-CMC parameterizations that were described earlier. 
More information on this decomposition can be found in \cite{HK04}.

\subsection{Numerical Bifurcation Analysis}
   \label{sec:bifur-numerical}

To apply Lyapunov-Schmidt reduction analytically, one needs detailed 
information about the null and range spaces of the linearization operators 
$F_u$ and $F_u^*$, and therefore the technique is usually limited 
to model situations.
However, by \emph{discretizing} problem~\eqref{eq:fulam}, it becomes
tractable to explicitly compute the information one needs
for a finite-dimensional approximation of~\eqref{eq:fulam}.
The problem retains the structure of~\eqref{eq:fulam}, but the discretized
problem now involves finite-dimensional spaces
$X=Y=\mathbb{R}^n$ and $Z=\mathbb{R}^m$, where $n$ is the resolution of
the discretization (e.g. number of finite element basis functions),
and $m$ is the number of parameters.
One now numerically computes bases explicitly for the range and null
spaces of what are now \emph{matrix} operators $F_u$ and $F_{\lambda}$.
Moreover, a \emph{numerical continuation algorithm} can be designed
around a predictor-corrector strategy: one increments the parameter
$\lambda_0 \to \lambda_1$ as part of a prediction step, followed by 
the use of Newton's method to solve~\eqref{eq:fulam} to correct the solution 
$u(\lambda_0) \to u(\lambda_1)$.
Where the fold or higher-order singularity on the branch is
encountered, the linearization $F_u$ becomes singular, 
leading to failure of Newton's method at the correction step.
To remedy this,
one adds a \emph{normalization equation} $N(u(\lambda(s)),\lambda(s),s)=0$
that allows the larger coupled system involving 
$F$ and $N$ to again be solvable.
One of the standard normalization techniques is known as 
\emph{pseudo-arclength continuation}, based on parameterizing~$\lambda(s)$ 
by arclength~$s$.
These numerical techniques are well-studied~\cite{Kell87} and there are
well-established software packages that implement these techniques,
such as AUTO~\cite{AUTO-mainreference}.  We use AUTO
as our primary tool in our numerical analysis of the conformal method
for far-from-CMC seed data.

\section{Numerical Results}
   \label{sec:numerical-results}

The AUTO software package applies numerical bifurcation analysis
to systems of ordinary differential equations.  Thus, to apply it
to the conformal method, we require seed data with sufficient 
symmetry so that the conformally-parameterized constraint equations
\eqref{eqn:ham_conf}-\eqref{eqn:mom_conf} reduce to ODEs.  Once this
is done, the parameter space can be explored via homotopies starting
with CMC solutions.  The following subsections describe a number of
concrete datasets where we have done so and report on folds
and the number of solutions found as the mean curvature is made
increasingly far-from-CMC and as the size of the TT tensor is varied.
Some care is needed in interpreting our results.  If we find, e.g.,
two solutions for a given seed data set, it does not imply that there
are only two. Rather, with homotopies we chose 
starting from CMC data, we were only able to find two.  There may be more that we did not find because we did not, or were not able, to explore the parameter space more broadly.  A similar caveat applies when we find no solutions; there may be solutions that we did not find along our homotopies.  Additionally, because we seek solutions having the same symmmetry as 
the seed data (in practice, these are solutions depending 
on only one coordinate of the underlying 3-manifold), we cannot
rule out the possibility of additional solutions that break this symmetry.
Finally, because of the high symmetry of our seed data, our
metrics always admit conformal Killing fields, and hence violate 
a key technical hypothesis of most theorems concering the conformal
method with non-CMC seed data.

\subsection{Sign-changing mean curvature}
\label{subsec:sign-change}

In this section we examine properties of the conformal method
when the far-from-CMC regime is reached via a sign changing mean 
curvature. The conformal seed data has the following form:
\begin{itemize}
    \item The manifold is $S^1\times M_2$ where
    $M_2$ is one of $S^2$, $T^2$ or a compact quotient $H^2$
    of hyperbolic space.  We use $s$ for the unit speed parameter
    along $S^1$.
    \item The mean curvature is
    \[
    \tau = 1 + a\cos(s).
    \]
    So $a=0$ is the CMC case, and $\tau$ is sign-changing whenever
    $|a|>1$.
    \item We work with two different classes of TT tensors.
\begin{enumerate}
    \item On a product $(M_1^{n_1},g_1)\times (M_2^{n_2},g_2)$,
the tensor
\begin{equation}\label{eq:barsigma}
\bar \sigma = n_1 g_2 - n_2 g_1
\end{equation}
is easily seen to be transverse-traceless.  For many
of our experiments we use a TT tensor of the form $\mu\overline\sigma$,
where $\mu$ is a constant.

\item Additionally, on $S^1\times T^2$ and on $S^1\times S^2$ we 
can find a TT tensor $\hat \sigma$ with constant (nonzero) norm
that is pointwise orthogonal to $\ck W$ for $W=w(s)ds$.
These are easy enough to find on 
$T^2$ and Appendix \ref{apdx:TT} 
describes a suitable construction on $S^2$.
Some of our experiments make use of TT tensors
of the form $\sigma = \eta \hat \sigma$ where $\eta$ is a constant.  
\end{enumerate}

\item For simplicity of exposition, we use a lapse density $N=1/2$.
We conducted experiments with other choices for the lapse density
but did not see qualitatively different phenomena.
\end{itemize}
\begin{figure}
\includegraphics[height=\plotheight]{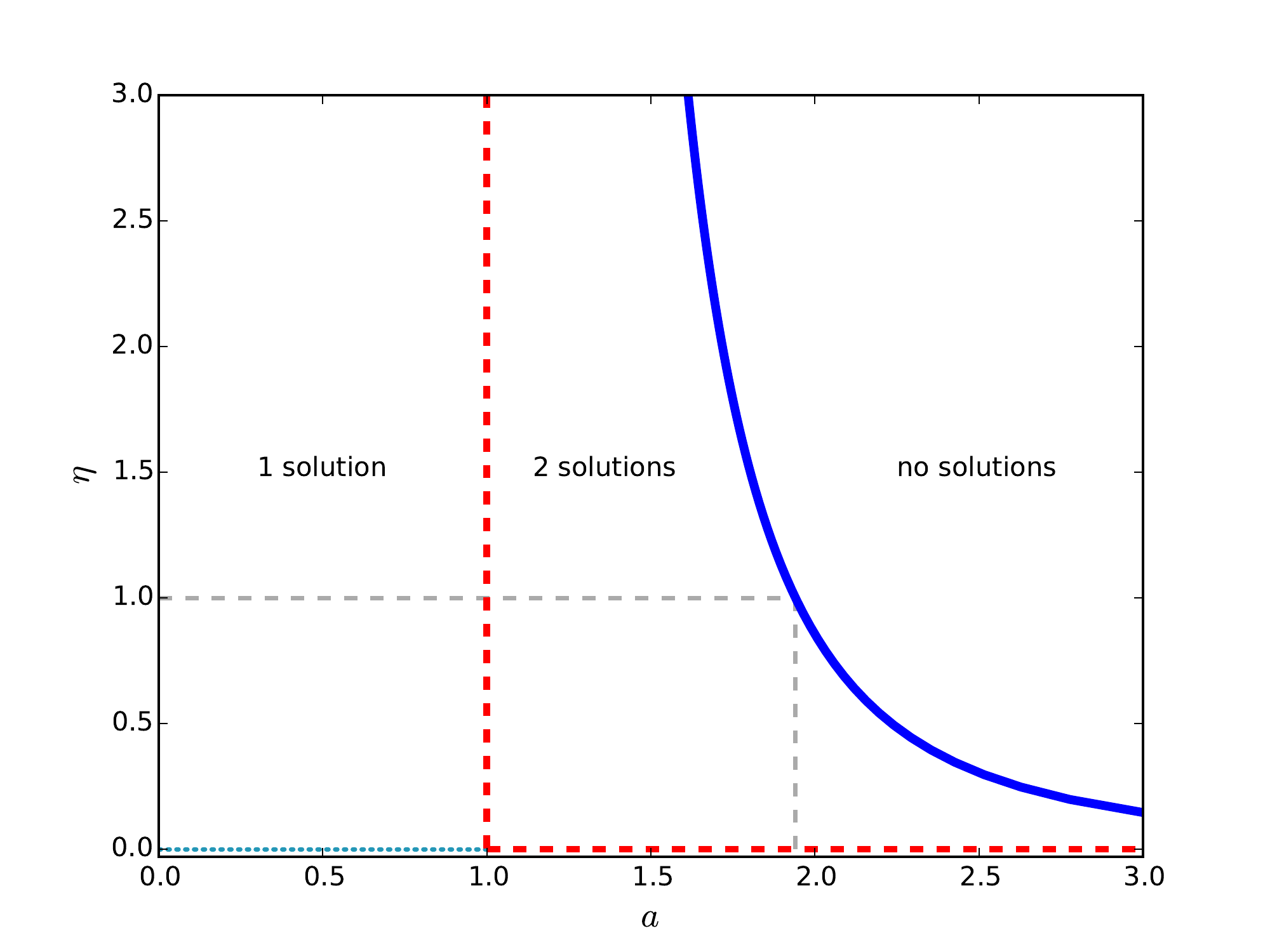}
\caption{\label{fig:S1-Yzero-mult}Multiplicity of solutions found on $S^1\times T^2$. Seed data:
$\tau=1+a\cos(s)$ and $\sigma = \eta \hat \sigma$.
The blue line is a computed fold, whereas the red dashed lines
indicate locations where blowup is inferred. The
blue dotted line indicates a zero-volume solution, which
should be discounted.  Solutions along the gray dashed lines 
are discussed in Figure \ref{fig:S1-Yzero-folds}.}
\includegraphics[height=\plotheight]{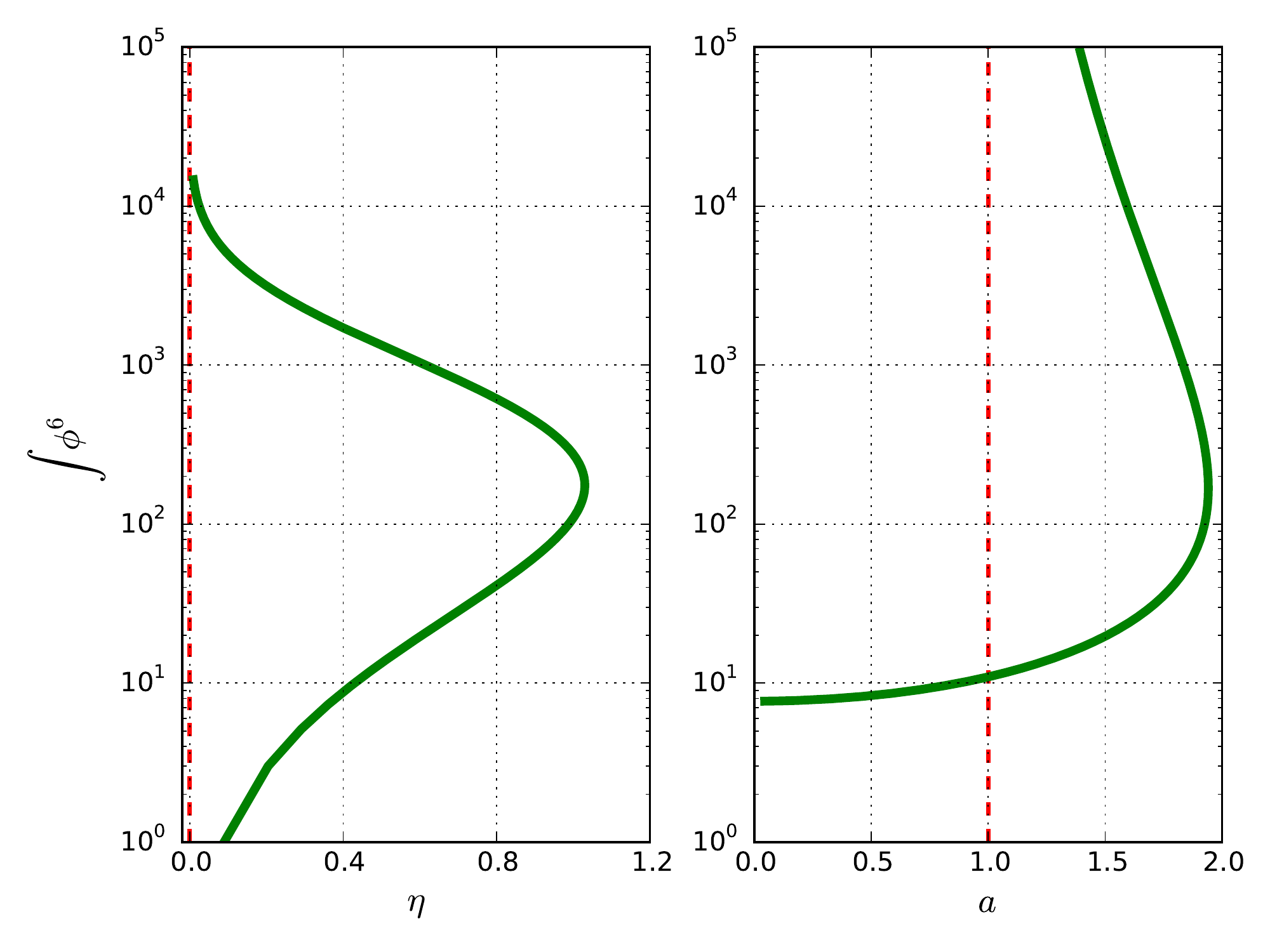}
\caption{\label{fig:S1-Yzero-folds}Volume of solutions on $S^1\times T^2$ 
as the size $\eta$ of the TT tensor (left-hand side)
and as the mean curvature $\tau=1+a\cos(s)$ (right-hand side) 
are varied.
The left-hand graph corresponds with the vertical gray
dashed line of Figure \ref{fig:S1-Yzero-mult} at $a=1.94$,
and  right-hand graph corresponds with the horizontal gray
dashed line at $\eta=1$.}
\end{figure}

For solutions $\phi=\phi(s)$ and $W=w(s)\partial_s$ of
the conformally-parameterized constraint equations 
\eqref{eqn:ham_conf}-\eqref{eqn:mom_conf} 
having the same symmetry as our seed data, 
the constraint equations reduce to the coupled ODEs
\begin{align}
\label{eqn:S1system}
  -8\varphi'' + R \varphi + \frac{2}{3} \tau^2 \varphi^5 &= 
                               \frac23 (\mu + 2w')^2\varphi^{-7} + \eta^2 \varphi^{-7},\\
  2w'' &= \tau' \varphi^6,
\end{align} where $R$ is the constant scalar curvature of the product manifold, and where we set $\eta=0$ if $R<0$.

\begin{figure}
\includegraphics[height=\plotheight]{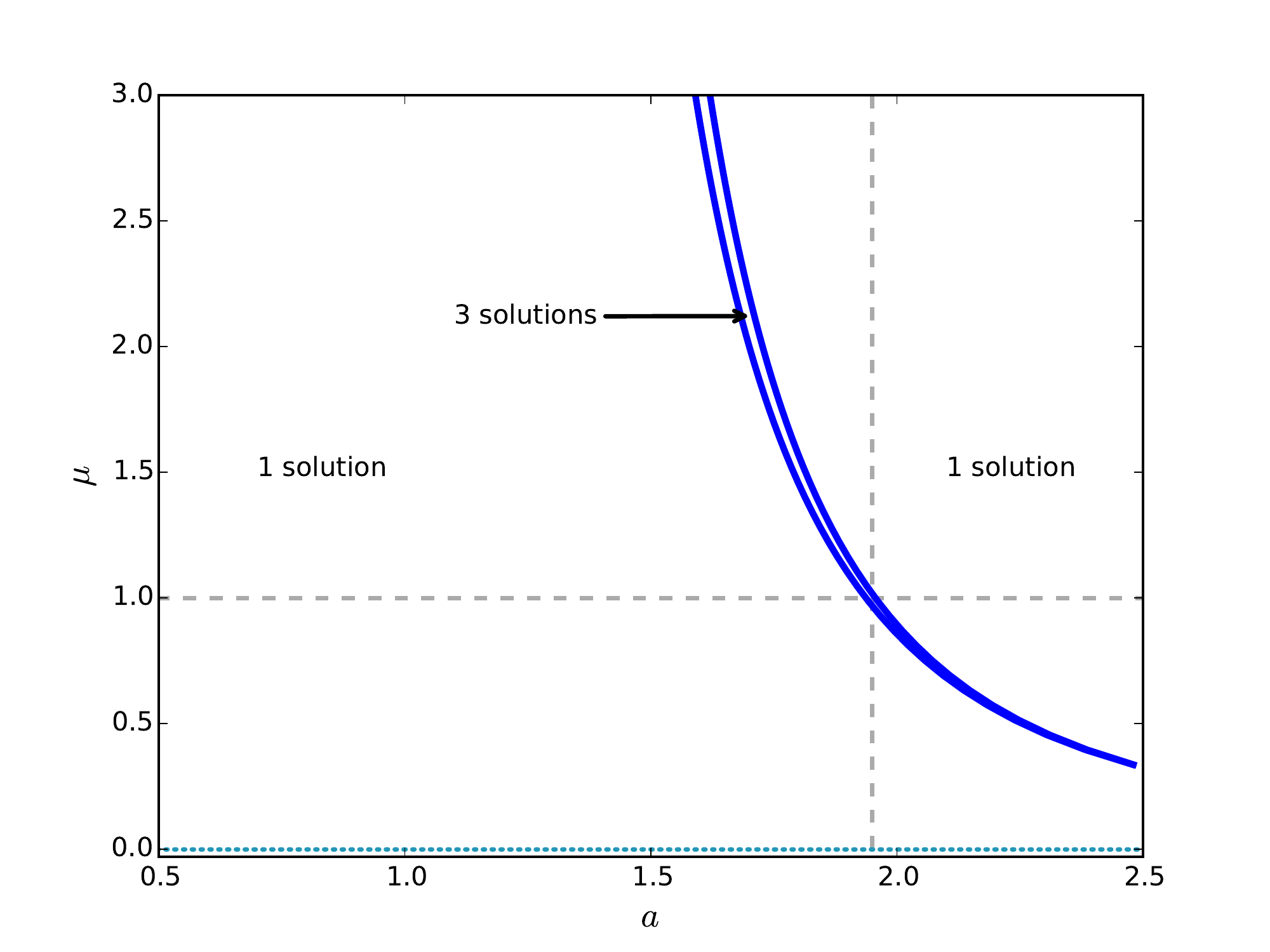}
\caption{\label{fig:S1-Ypos-mult}
Multiplicity of solutions found on an $S^1\times S^2$ 
with $R=0.001$. Here
$\tau=1+a\cos(s)$ and $\sigma = \eta \hat \sigma$.
The blue line is a computed fold. On the blue dotted line
at $\mu=0$  (i.e., $\sigma\equiv 0$) 
there is no solution; the volume has shrunk to zero.
Solutions along the gray dashed lines 
are discussed in Figure \ref{fig:S1-Ypos-folds}.}
\includegraphics[height=\plotheight]{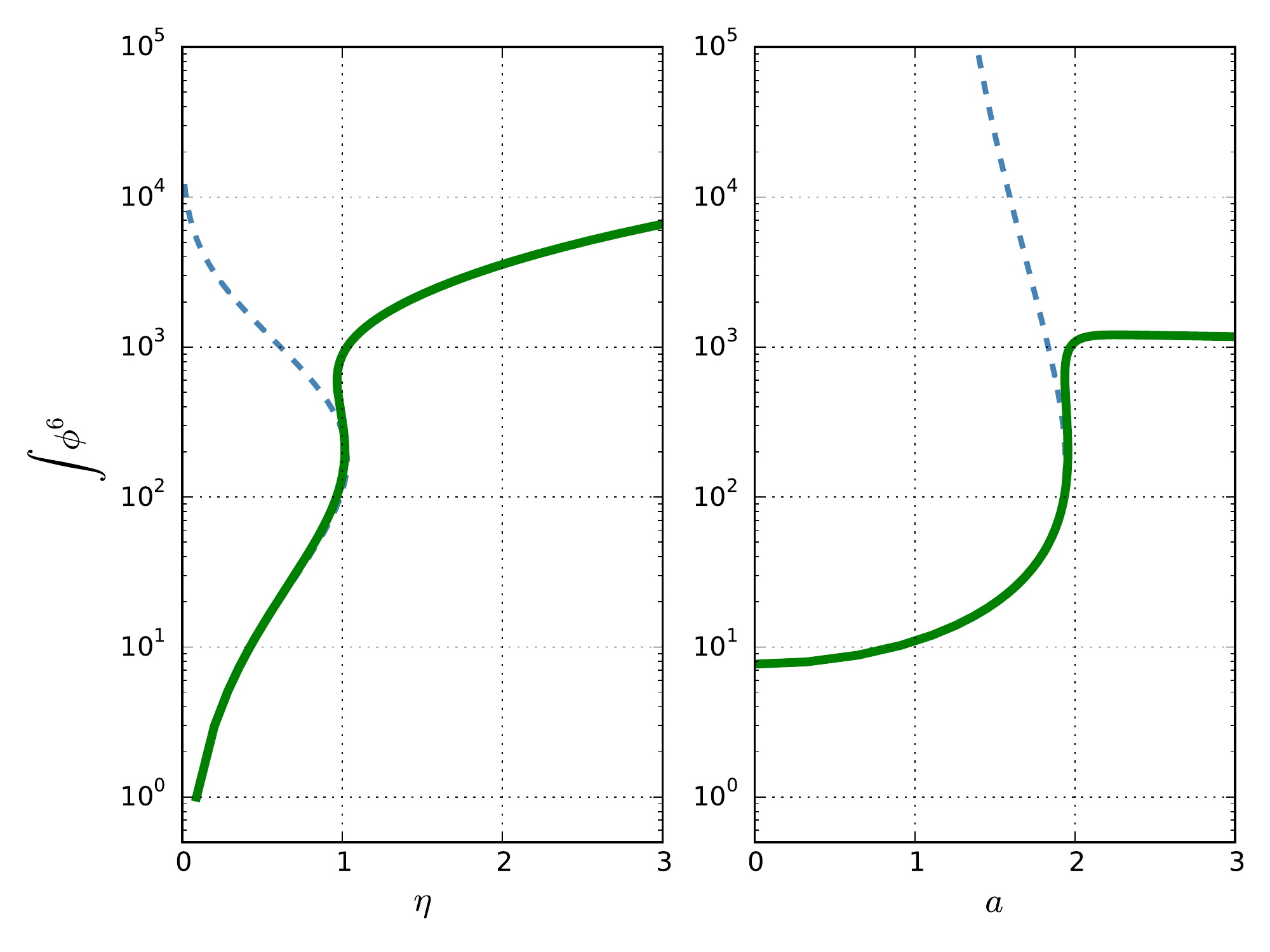}
\caption{\label{fig:S1-Ypos-folds}
Volume of solutions on an $S^1\times S^2$ 
with $R=0.001$
as the size of the TT tensor and 
as the mean curvature are varied.
The left and right-hand graphs correspond
with the vertical and horizontal gray
dashed lines of Figure \ref{fig:S1-Ypos-mult} respectively.
The dashed lines are the corresponding volumes
on $S^1\times T^2$ from Figure \ref{fig:S1-Yzero-folds} for comparison.}
\end{figure}

\begin{figure}
\includegraphics[height=\plotheight]{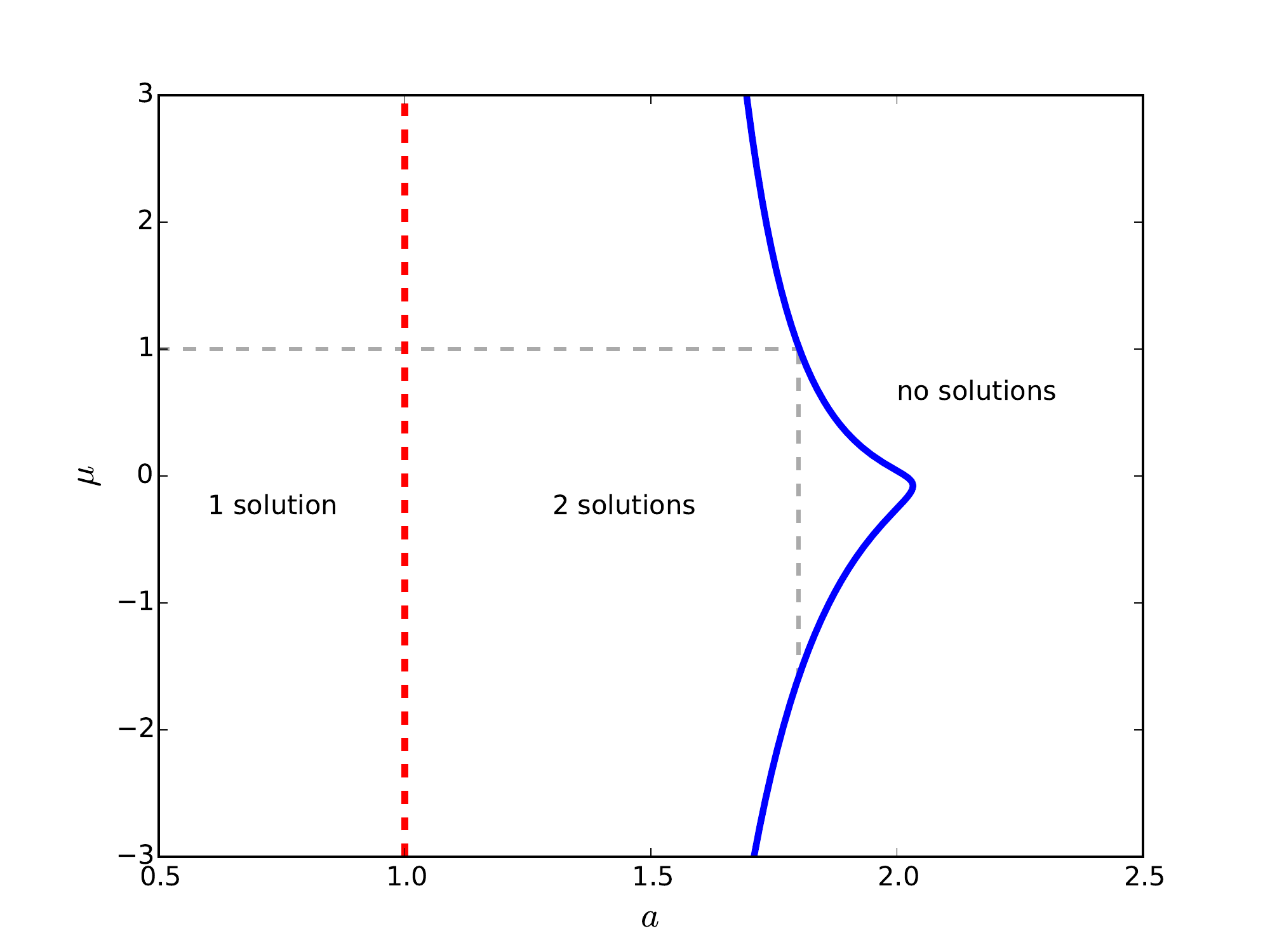}
\caption{Multiplicity of solutions found on an $S^1\times H^2$ 
with negative scalar curvature $R=-0.1$. Seed data:
$\tau=1+a\cos(s)$ and $\sigma = \mu \overline \sigma$.
The blue line is a computed fold, whereas the red dashed line
indicate locations where blowup is inferred. Solutions along 
the gray dashed lines 
are illustrated in Figure \ref{fig:S1-Yneg-folds}.}
\label{fig:S1-Yneg-mult}
\includegraphics[height=\plotheight]{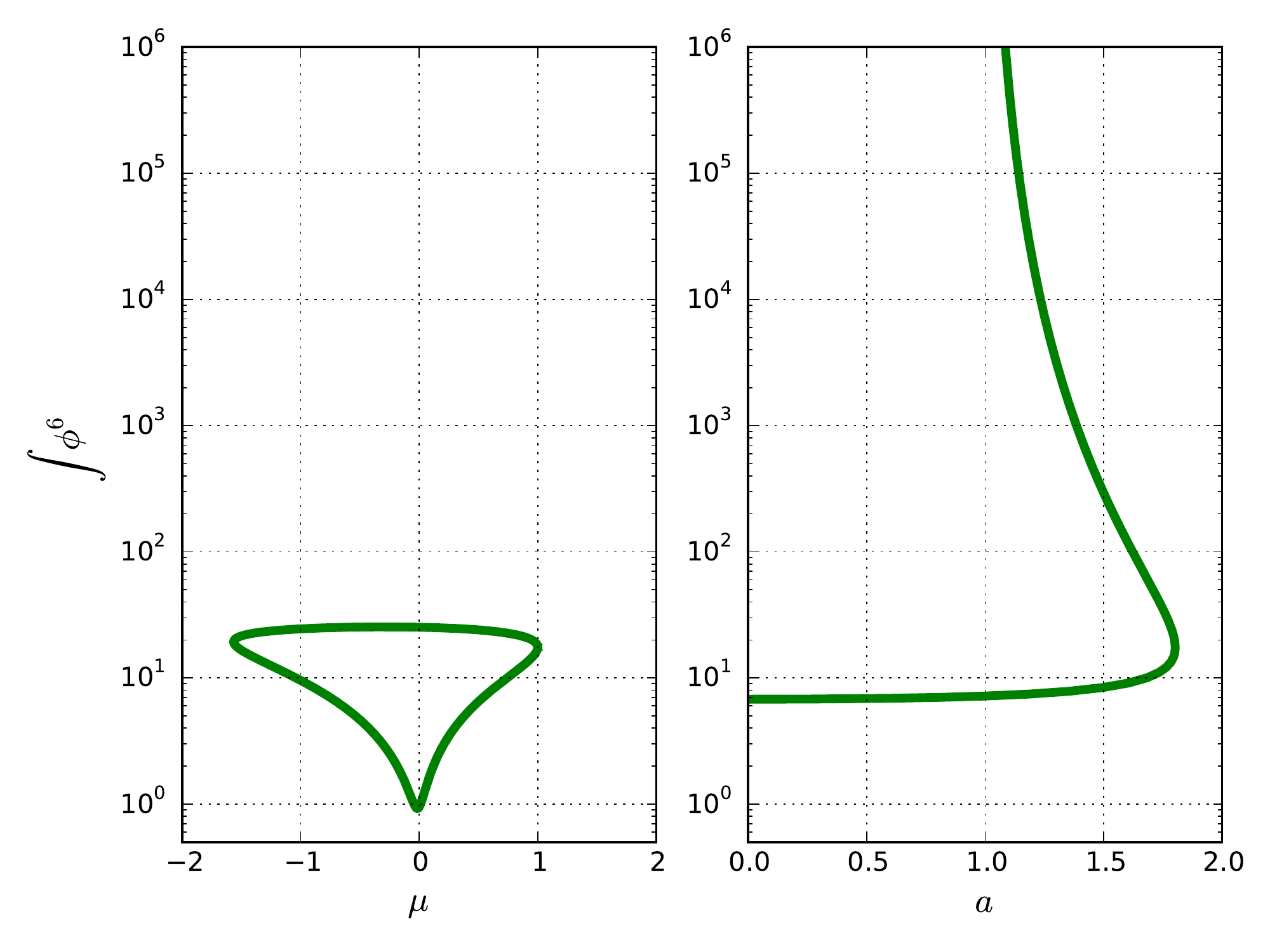}
\caption{Volume of solutions on an $S^1\times M^2$ 
with $R=-0.1$
as the size of the TT tensor and 
as the mean curvature are varied.
The left and right-hand graphs correspond
with the vertical and horizontal gray
dashed lines of Figure \ref{fig:S1-Yneg-mult} respectively.}
\label{fig:S1-Yneg-folds}
\end{figure}

We start by examining solutions on $S^1\times T^2$,
and it turns out that the effects of the two TT tensors $\hat \sigma$ (corresponding to the parameter $\eta$) and $\bar \sigma$ (corresponding to the parameter $\mu$) are quite different.  In particular, if $\eta=0$
and hence $\sigma=\mu\bar\sigma$, it is easy to see that 
$\phi\equiv|\mu|^{1/6}$ and $w=\mu a \sin(s)$ solve system 
\eqref{eqn:S1system}.  These exact solutions are among those
discussed in \cite{Maxwell:2014b}, 
and are the only solutions we were able to find using AUTO.  Hence we obtain 
results consistent with existence and uniqueness for this family,
save for the exceptional case $\mu=0$ where 
the solution degenerates to zero volume.

On the other hand, fixing $\mu=0$, Figure \ref{fig:S1-Yzero-mult} indicates 
the multiplicity of solutions
found on $S^1\times T^2$ with $\tau=1+a\cos(s)$ and 
$\sigma = \eta \hat \sigma$ as the parameters $a$ and $\eta$ are varied.
This computation is an analogue of
the examples of \cite{M11} recalled in Theorem \ref{thm:modelp}, 
except that it involves a family of 
smooth, rather than piecewise constant, mean curvatures.

In the region 
where the mean curvature changes sign,
(i.e., for $a>1$)
we find a fold, indicated by a solid blue line,
and no solutions when the TT tensor is sufficiently large.
Figure \ref{fig:S1-Yzero-folds} indicates how the volume of the solution metric changes
changes as we traverse the gray dashed lines of Figure \ref{fig:S1-Yzero-mult}; the plots, in green,
indicate $\int_{S^1} \phi^6 $, which agrees with volume up to 
an inessential constant factor depending on the second factor of the product manifold.
On the vertical gray dashed line, corresponding to Figure \ref{fig:S1-Yzero-folds} (left-hand side), 
the sign-changing mean curvature is fixed 
and the size of the TT tensor is varied. When the TT tensor is sufficiently large a fold appears and there are no solutions, and as it is
decreased to zero there are two solutions, one heading to zero volume and the other blowing up.  
The horizontal gray dashed line of Figure \ref{fig:S1-Yzero-mult}
corresponds with Figure \ref{fig:S1-Yzero-folds} (right-hand side)
and we again observe the fold and a branch where the volume blows up.
It is difficult to ascertain from this graph the precise value of $a$
where the blowup occurs, and computationally we found it difficult
to approach the singularity.  We find, however, that near the singularity
the value of $\int \phi^6$ along this line is in reasonable
agreement with a growth rate $\sim (a-1)^{-5.14}$.  In later related
examples we find more conclusive evidence of blowup at $a=1$,
so we infer this is the case here as well.  That is, there is
a transition at $a=1$, the threshold of
sign-changing mean curvatures.  The red dashed lines of Figure \ref{fig:S1-Yzero-mult} indicate locations where we have inferred blowup occurs.
The results we observe here are completely parallel with
the prior analytical results of \cite{M11} found for a more
restrictive mean curvature.

Recalling that the only analytical results for sign-changing mean curvatures are available in the Yamabe-null case, we now consider the effect of changing the Yamabe class in these computations.  We first consider
the Yamabe-positive case by adjusting
the previous computation by setting $R=0.001$ in system \eqref{eqn:S1system}
(e.g. by working on an appropriate $S^1\times S^2$). As in the Yamabe-null setting, when working with families of seed data with 
$\sigma=\mu\bar\sigma$ we found tame behavior (one solution
was found for each parameter).  On the other hand, for seed data with 
$\sigma=\mu\hat\sigma$ and $\tau=1+a\cos(s)$ the situation is more
complicated.
Figure \ref{fig:S1-Ypos-mult} indicates the multiplicity of solutions found
for this family and can be compared directly with its
Yamabe-null
counterpart, Figure \ref{fig:S1-Yzero-mult}.  The region of zero
solutions has vanished and we find solutions exist always.  However,
in a region near the original Yamabe-null fold, we find two folds
and a narrow region of
multiple solutions in between.  Figure \ref{fig:S1-Ypos-folds} shows
the effect of traversing along the dashed gray lines of
Figure \ref{fig:S1-Ypos-mult} and indicates how the family of solutions
in this case corresponds with the Yamabe null families in Figure \ref{fig:S1-Yzero-folds}.  Note that the various blowup phenomena found
in the Yamabe-null case have vanished.
The observed fold can be thought of as a purturbation of the situation at $R=0$, and separate computations show that as $R$ is pulled away from zero and approaches, e.g., $R=1$ the volume curves in Figure~\ref{fig:S1-Ypos-folds} stablize further, and the doubling back behavior vanishes.

Turning to the Yamabe-negative case we set $R=-0.1$ in equations \eqref{eqn:S1system} and use $\eta=0$ since we do not have an equivalent
for $\hat\sigma$ for Yamabe-negative seed data.
Thus we use $\mu$ to scale the size of the TT tensor, and 
unlike the Yamabe-positive and -null cases when using
$\sigma = \mu \overline \sigma$, we find interesting results;
Figure \ref{fig:S1-Yneg-mult} shows
the number of solutions found when using mean curvatures of the form $\tau = 1 + a \cos(s)$.  Note that, unlike the parameter $\eta$, 
system \eqref{eqn:S1system} does not have even symmetry with 
respect to $\mu$ and hence our computations involved values of $\mu$ with both signs.  As the mean curvature is made increasingly far-from CMC we find a fold, and subsequently no solutions.  Just as in the Yamabe-null case, the second branch of solutions blows up at $a=1$, the value of $a$ that transitions from constant-sign to sign-changing mean curvatures; 
(Figure \ref{fig:S1-Yneg-folds}, right-hand side).  On the other hand,
far enough into the far-from-CMC regime we were unable to find solutions 
of system \ref{eqn:S1system}. This is perhaps surprising since in the near-CMC setting one can always find solutions when $\sigma\equiv 0$
(unless $\tau\equiv 0$ as well), and indeed solutions at $\sigma\equiv 0$
are a hallmark of Yamabe-negative CMC seed data.  Instead, we find that 
at $\sigma\equiv 0$, as $a$ is increased to make the solution far-from-CMC,
there is a fold around $a=2$ and no solutions were encountered beyond this point.
The absence of solutions appears to be loosely associated with the behavior when $\mu=0$ (i.e. $\sigma\equiv 0$), although one notes that the tip of the `nose' on the blue fold line of Figure \ref{fig:S1-Yneg-mult} does not lie on the line $\mu=0$.
Therefore, there are values of $a$ where no solutions exist at $\mu\equiv 0$, but for which solutions exist for certain values of $\mu\neq 0$.

\subsection{Constant-sign mean curvatures}

We now examine excursions into the far-from CMC regime
using mean curvatures of the form $\tau = \xi^a$, where
$\xi$ is a positive function. Starting again with 
$S^1$-dependent data of the form of the previous section, 
but now with mean curvature
$\tau(s) = (\frac 2 3 - \frac 1 3 \cos(s))^a$, we were
only able to find a single solution of the constraint equations
for all choices of $a$, $\mu$ and $\eta$, except (as is expected) 
when $\sigma\equiv 0$ in the Yamabe non-negative case. We can
understand the tame behavior we observed by appealing to 
the limit equation of Theorem \ref{thm:limiteq}.
For $S^1$-symmetric solutions of the 
$S^1$-dependent data we consider, the limit equation becomes
\begin{equation}\label{eq:S1DepLimitEqn}
  \left(\frac{W'}{N}\right)' = \alpha \left|\frac{W'}{N}\right|\frac{\tau'}{\tau}
\end{equation} 
and it is straightforward to show that this admits no solutions on $S^1$.

\begin{theorem}\label{thm:noLimitSolns}
Let $\tau>0$ be in $C^1$. Then there are no nontrivial $C^2$ solutions $W$ 
of the $S^1$-dependent limit equation \eqref{eq:S1DepLimitEqn}.
\end{theorem}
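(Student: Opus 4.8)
The plan is to pass to the variable $v := W'/N$ (recall $N>0$ is the fixed lapse density), reduce the claim to $v\equiv 0$, and combine two elementary observations: that the equation can be integrated explicitly on any arc where $v$ has a fixed sign, and that $W'$ has vanishing integral over $S^1$ since it is the derivative of a function on the circle. In the variable $v$, equation \eqref{eq:S1DepLimitEqn} reads
\[
v' = \alpha\,|v|\,\frac{\tau'}{\tau} = \alpha\,|v|\,(\log\tau)',
\]
which is meaningful because $\tau>0$ is $C^1$; note that $v$ is $C^1$ (the right-hand side being continuous), and that $W$ constant corresponds to $v\equiv 0$. So ``nontrivial'' means $v\not\equiv 0$, and I would argue by contradiction, assuming $v\not\equiv 0$.

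First I would eliminate the possibility that $v$ never changes sign. If $v\ge 0$ on all of $S^1$ (the case $v\le 0$ being symmetric), then since $N>0$,
\[
\int_{S^1} N v \, ds = \int_{S^1} W' \, ds = 0
\]
forces $v\equiv 0$, a contradiction. Hence $v$ is strictly positive somewhere and strictly negative somewhere. The key step is then: on any arc where $v>0$ we have $|v|=v$, so $(\log v)' = \alpha(\log\tau)'$ and therefore $v = C\,\tau^{\alpha}$ on that arc for a constant $C>0$. Now pick a connected component $(a,b)$ of the open set $\{v>0\}$. Since $v$ is negative on some open subarc, $\{v>0\}$ is contained in a proper closed subarc of $S^1$, so $(a,b)$ has distinct endpoints; both $a$ and $b$ lie outside $\{v>0\}$, so by continuity $v(a)=v(b)=0$. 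But letting $s\uparrow b$ in $v=C\tau^{\alpha}$ gives $v(b)=C\,\tau(b)^{\alpha}>0$, a contradiction. Therefore $v\equiv 0$, i.e. $W'\equiv 0$ and $W$ is trivial.

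I do not expect a genuine obstacle here; the only points needing attention are the bookkeeping on $S^1$ (ensuring that a component of $\{v>0\}$ really has a boundary point at which $v$ vanishes, which is where ``$v$ changes sign'' is used) and the legitimacy of the explicit integration $v=C\tau^{\alpha}$ on sign-definite arcs, which is precisely where the hypotheses $\tau>0$ and $\tau\in C^1$ enter. An alternative to the arc decomposition is to evaluate the equation at an interior positive maximum $s_1$ of $v$: there $v'(s_1)=0$ forces $\tau'(s_1)=0$, and then $v=C\tau^{\alpha}$ on the component of $\{v>0\}$ containing $s_1$, together with $\tau$ being bounded below by a positive constant, contradicts $v\to 0$ at the endpoints of that component; but the argument via components is cleaner and I would present that one.
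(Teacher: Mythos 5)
Your proof is correct and follows essentially the same route as the paper's: integrate the equation on a maximal arc where $W'/N$ has a fixed sign to get $v=C\tau^{\pm\alpha}$, then derive a contradiction at the arc's endpoints, where $v$ must vanish while $\tau^{\pm\alpha}$ is bounded away from zero. Your only additions are cosmetic safeguards the paper leaves implicit (working with $v=W'/N$ to allow nonconstant $N$, and using $\int_{S^1}W'\,ds=0$ to rule out the one-signed case, which also follows from periodicity of $w$).
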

\begin{proof}
Suppose $W$ is a nontrivial solution, and consider a maximal interval on which 
$W'$ does not vanish.  On this interval we have
\begin{equation}
\log(W')' = k \log(\tau)'
\end{equation}
where $k=\alpha$ if $W'>0$ on the interval, and $k=-\alpha$ if $W'<0$.
Hence
\begin{equation}\label{eq:we-know-what-W-is}
W'  = c\tau^k
\end{equation}
for some constant $c\neq 0$.  At the endpoints of the interval $W'$ tends to zero.
But the right-hand side of equation \eqref{eq:we-know-what-W-is} is uniformly bounded
away from zero.
\end{proof}

Strictly speaking, Theorem \ref{thm:limiteq}
does not apply in our setting
because of the presence of conformal Killing fields. 
We expect, however, that one can use techniques 
found in \cite{M11} to adapt the main theorem of \cite{DGH10} 
to this specific 
family of data to conclude that the nonexistence of solutions to
\eqref{eq:S1DepLimitEqn} implies existence of ($S^1$-symmetric)
solutions.

\begin{figure}
  \includegraphics[height=\plotheight]{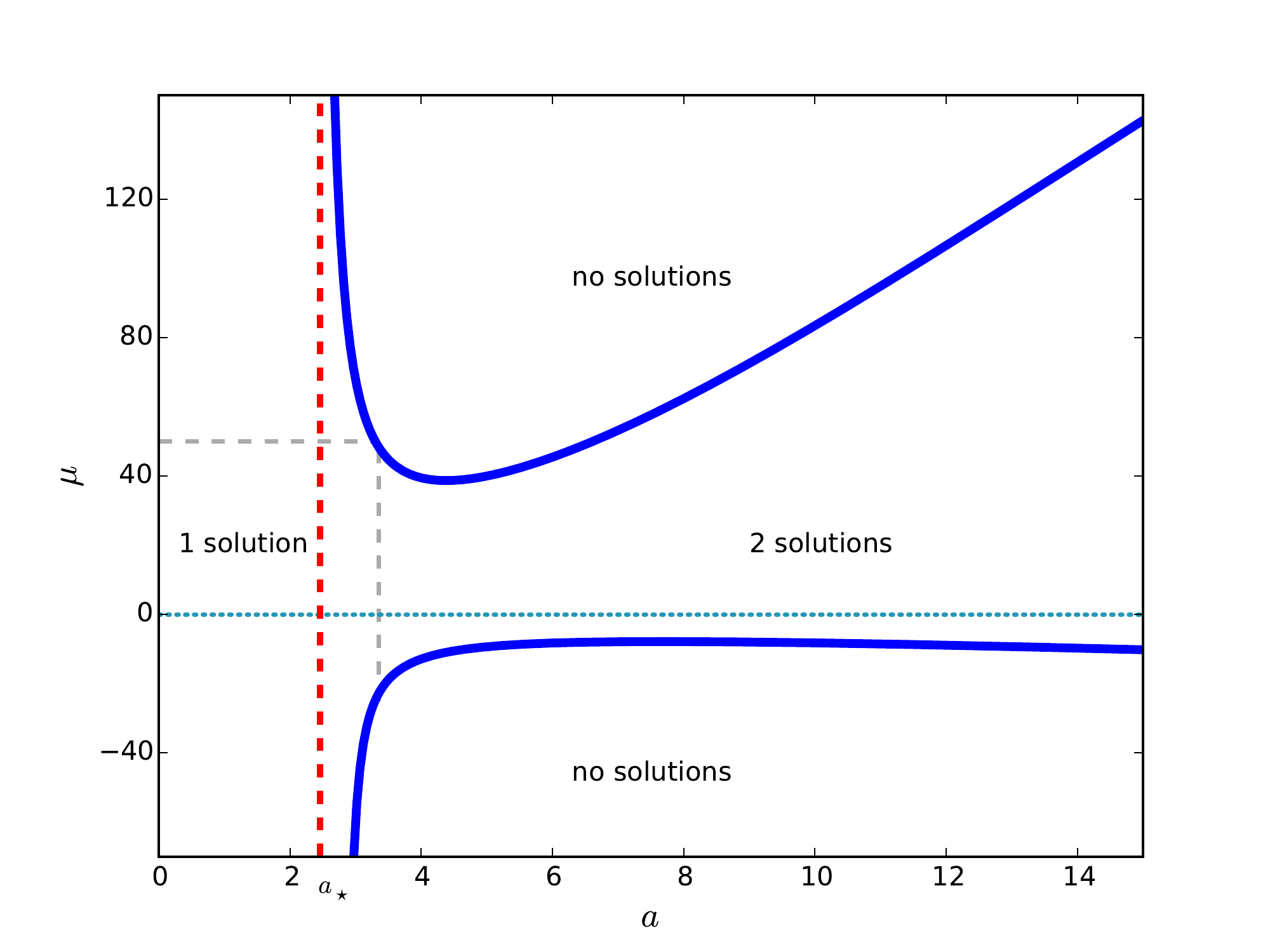}
\caption{  \label{fig:LatDepSolvability}
Multiplicity of solutions found on an $S^2\times S^1$ 
with positive scalar curvature $R=1$. Seed data: $\tau=(\frac{2}{3}+\frac{1}{3}\cos(\phi))^a$ and $\sigma = \mu \overline \sigma$.
The solid blue line is a computed fold, whereas the red dashed line
indicate locations where blowup is inferred. At the dotted blue
line at $\mu=0$ there is a zero volume solution which should be discounted. 
The gray dashed lines 
are discussed in Figure \ref{fig:LatDep-folds}.}

\includegraphics[height=\plotheight]{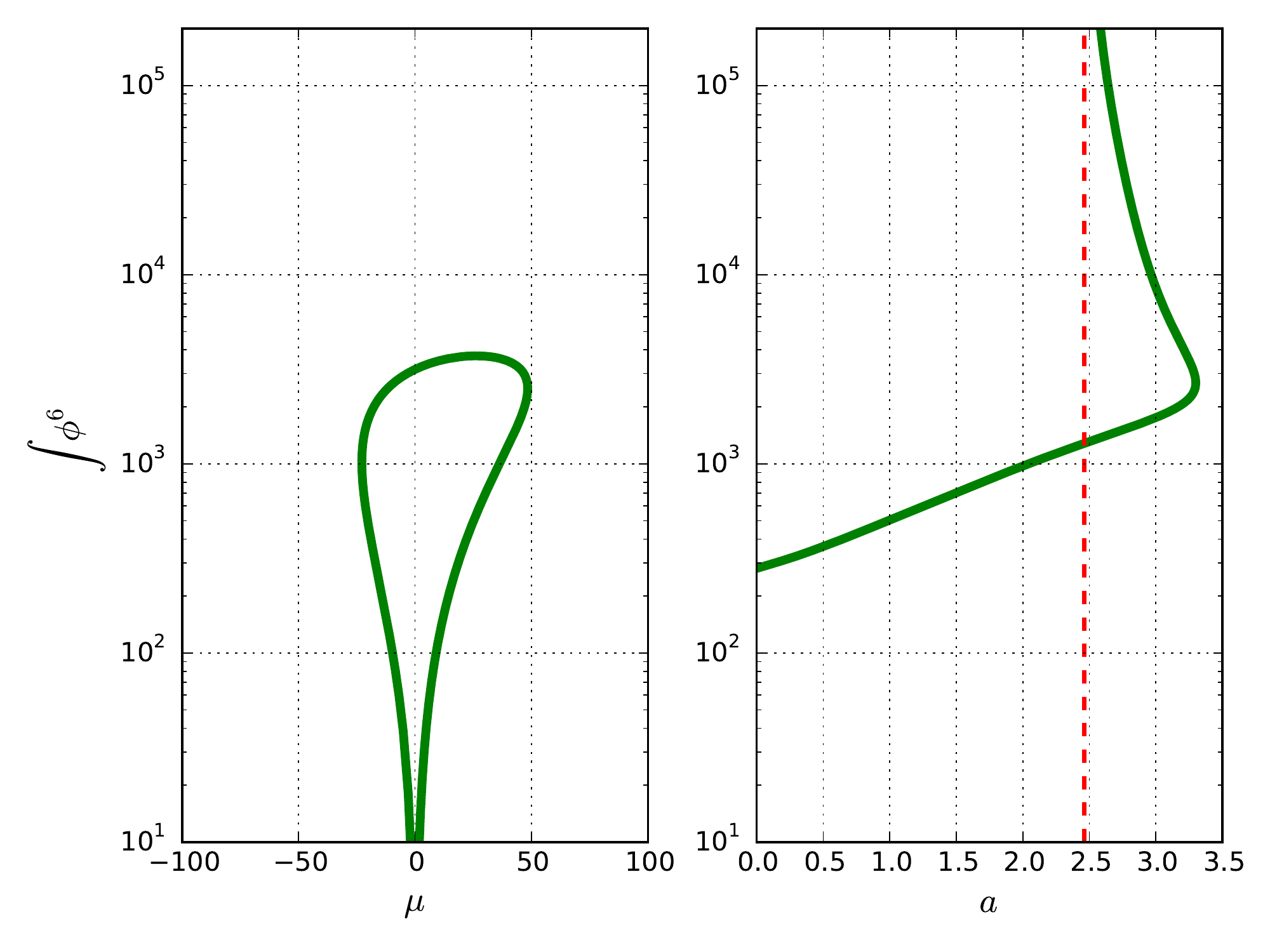}

\caption{\label{fig:LatDep-folds}
Volume of solutions on an $S^2\times S^1$ 
with $R=1$
as the size $\mu$ of the TT tensor and 
as the mean curvature $\xi^a$ is varied.
The left- and right-hand graphs correspond
with the vertical and horizontal gray
dashed lines of Figure \ref{fig:LatDepSolvability} respectively.}
\end{figure}

\begin{figure}
\includegraphics[height=\plotheight]{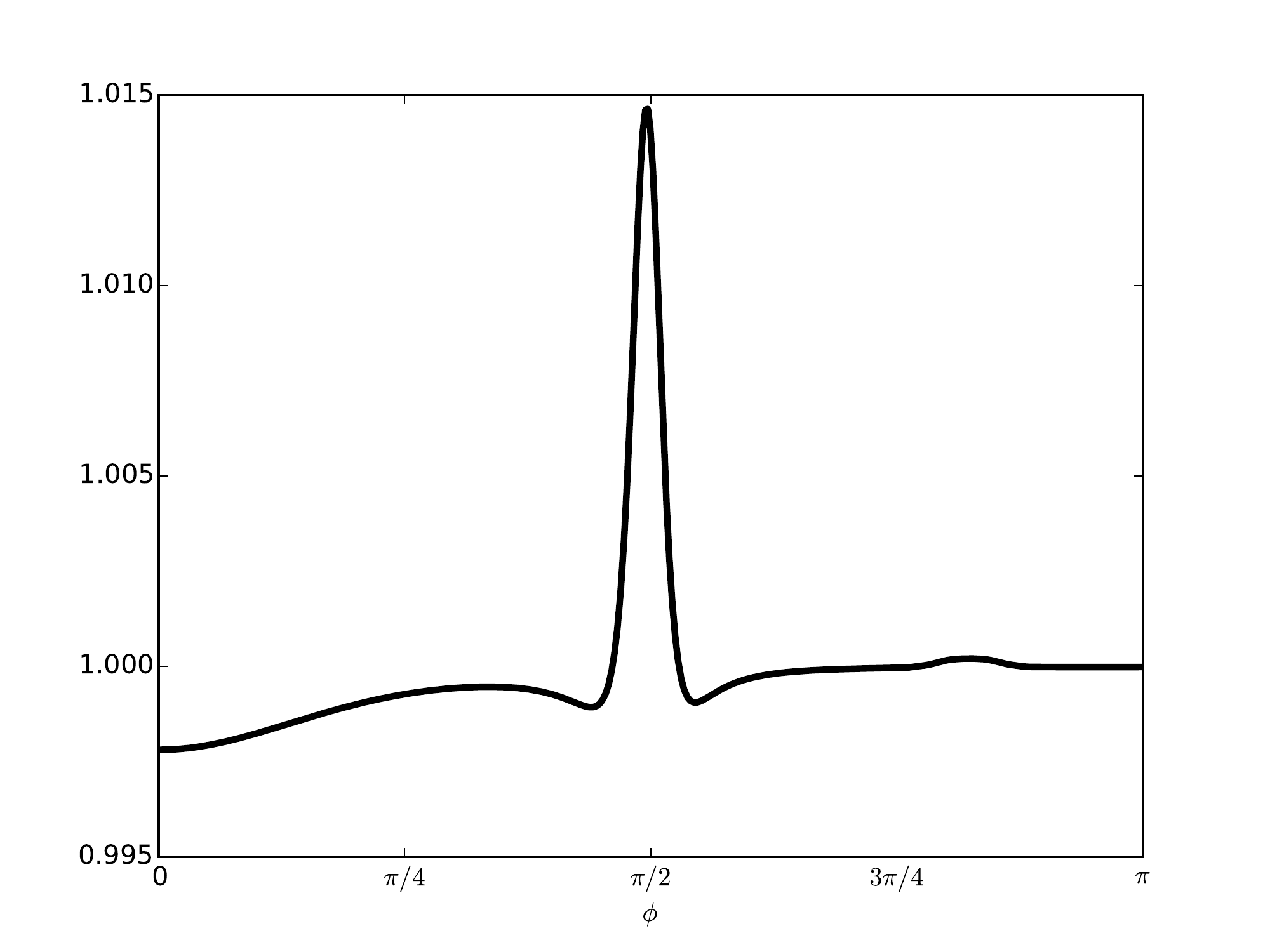}

\caption{\label{fig:LatDep-limit-eq}
Ratio of $\sqrt{\frac{2}{3}}\varphi^6 \tau$ to
$\left|\frac{1}{2N}\ck W\right|$ as a function
of latitude $\phi$ for the large
solution of the conformally-parameterized
constraint equations at $a=2.46$, $\mu=1$ in Figure 
\ref{fig:LatDepSolvability}.  The ratio is nearly
1, indicating that the vector field $W$ of the solution
is nearly a solution of the limit equation \eqref{eq:limitEqn}.}
\end{figure}

The simple behavior seen for $S^1$-dependent data does not
hold generally, however.  Throughout the remainder of this 
section we use conformal seed data of the following type:
\begin{itemize}
    \item The manifold is $S^2\times M_2$ where
    $M_2$ is one of $S^1$,or $H^2$.
    We use $\phi\in(0,\pi)$ for a latitude parameter on $S^2$.
    \item The mean curvature $\tau=\xi^a$ depends only on $\phi$,
    where $0\le \xi \le 1$ and $\xi(\phi)=1$ somewhere.
    In practice we used
    \[
    \xi(\phi) = \frac 2 3 -\frac 1 3 \cos(k\phi)
    \]
    with $k=1$ or $2$. 
    \item The TT tensor is $\sigma = \mu \overline \sigma$,
    where $\overline \sigma$ was introduced in equation 
    \eqref{eq:barsigma}.
    \item The lapse density is $N=1/2$.
\end{itemize}

As a first example, consider
data on $S^2\times S^1$.
The reduced conformal constraint equations for 
latitude-dependent data are more
complicated than those for $S^1$-dependent data; the main
differential operators are
\begin{align*}
  \Lap f &= \frac1r \left(f''+ \cot\phi f'\right), \,\,\,\,\, \textrm{ and}\\
  \div \mathcal{L}W &= (3 w'' + 3 \cot\phi w' +\frac1{2r}(1-3 \cot^2\phi)w)d\phi
\end{align*} where $f=f(\phi)$ and $W=w(\phi)d\phi$.  We seek solutions
of the constrant equations 
\eqref{Eq:HamConstr}-\eqref{Eq:MomConstr} the form $\varphi=\varphi(\phi)$
and $W=w(\phi)d\phi$ supplemented with boundary conditions
$\varphi'=0$ and $w'=0$ at $\phi=0$ and $\phi=\pi$ needed to ensure regularity.  An additional boundary condition $w=0$ at $\phi=0,\pi$
is effectively enforced by the momentum constraint.

Figure \ref{fig:LatDepSolvability} shows the number of solutions found 
on an $S^2\times S^1$ with scalar curvature $R=1$ with 
$\sigma=\mu\overline\sigma$ and 
a relatively simple non-CMC mean curvature
\[
\tau = \left[\frac{2}{3}+\frac12 \cos(\phi)\right]^a
\]
Theorem \ref{thm:ngn2} would apply to this data, except 
for the usual caveat about conformal Killing fields and,
more crucially, the fact that 
the mean curvature violates the non-generic inequality
\eqref{eq:almostckf}. We nevertheless find behaviour consistent with its
conclusions:  multiple solutions when both $a$ is sufficiently large 
and $\mu\neq 0$ is sufficiently small.  Moreover, the transition to the far-from-CMC regime is abrupt, starting at 
$a\approx 2.46$, which we will call $a_*$.  Figure \ref{fig:LatDep-folds}, 
(right-hand side) shows that $a_*$ is associated with a blowup of a branch of solutions, and one expects
there is a solution of the limit equation \eqref{eq:limitEqn}
with $\alpha=1$ at $a=a_*$.  Indeed, the limit equation arises as
the Hamiltonian constraint degenerates to the algebraic equation
\begin{equation}\label{ref:limeq-base}
\sqrt{\frac{2}{3}}\phi^6\tau = \left| \frac{1}{2N}\ck W \right|,
\end{equation}
which can then be substituted back into the momentum constraint.
Figure \ref{fig:LatDep-limit-eq} shows the ratio of the two sides of 
equation \eqref{ref:limeq-base} for the larger of the two
solutions at the point
$a=2.46$m $\mu = 1$  in Figure \ref{fig:LatDepSolvability}. The ratio
is nearly 1, so the solution vector field $W$ at that point is nearly
a solution of the limit equation.

 The proofs of 
Theorems  \ref{thm:Nguyen} and \ref{thm:ngn2} demonstrate the existence
of multiple solutions for $a$ large and $\mu$ small by showing
that at $\mu=0$ there is both a zero volume solution and a true 
solution, and
that perturbing off of these yields two solutions; this mechanism is illustrated in Figure \ref{fig:LatDep-folds} (left-hand side).  Recall that
solutions with $\mu=0$ are impossible in the near-CMC setting for Yamabe-positive seed data such as this, and
at the dotted line at $\mu=0$ of Figure \ref{fig:LatDepSolvability} 
there is one less solution than at neighboring points 
(so there is no solution
to the left of the singularity at $a=a_*$ and just one solution to the right).

\begin{figure}
  \includegraphics[height=\plotheight]{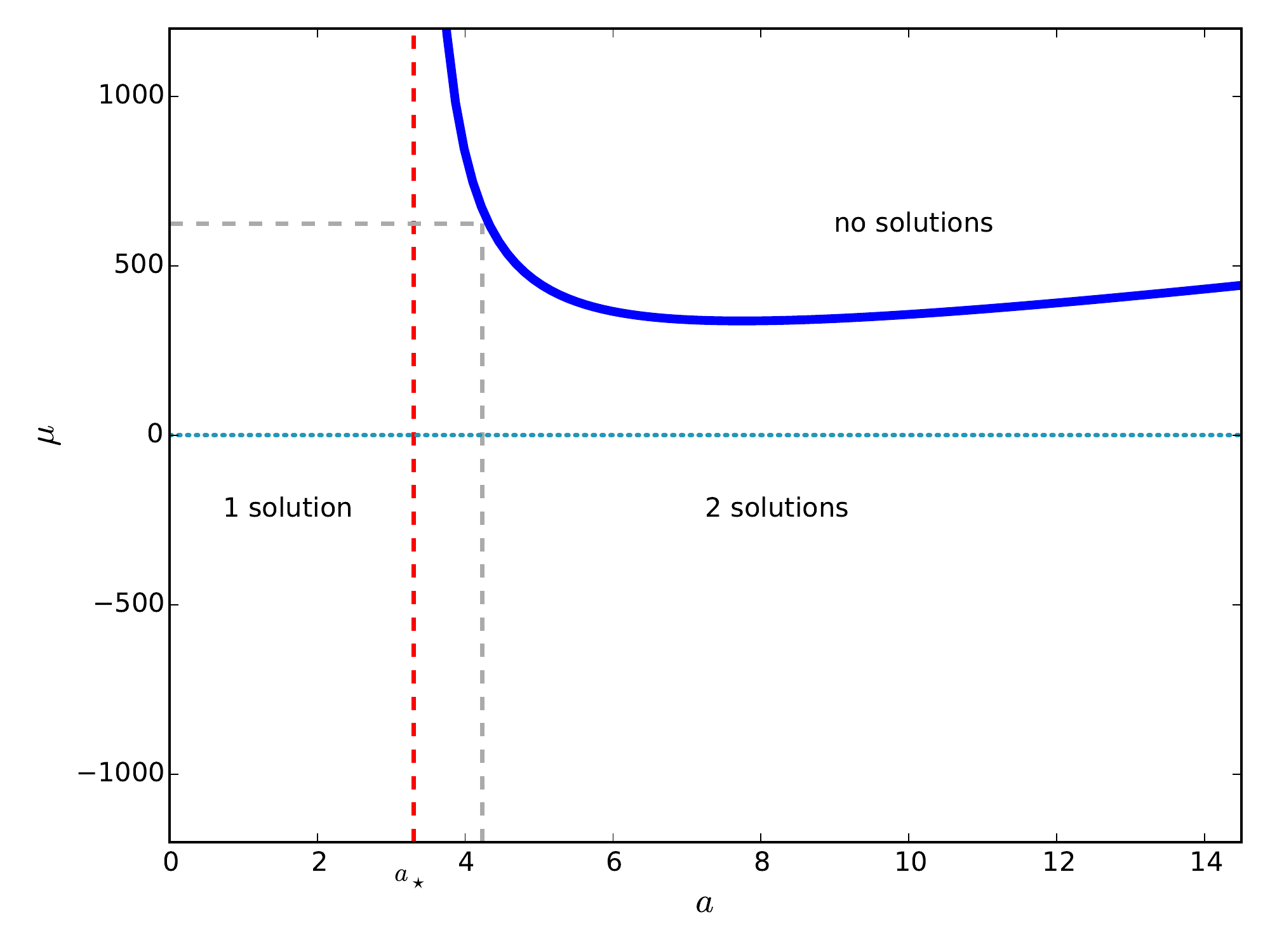}
\caption{  \label{fig:LatDep-NTC-Solvability}
Multiplicity of solutions found on an $S^2\times S^1$ 
with positive scalar curvature $R=1$. Seed data: 
$\tau=(\frac{2}{3}+\frac{1}{3}\cos(2\phi))^a$ 
and $\sigma = \mu \overline \sigma$; note the $2$ in the argument of $\cos$.
The solid blue line is a computed fold, whereas the red dashed lines
indicate locations where blowup is inferred. 
At the dotted blue line there is a zero-volume solution,
which should be discounted.
The gray dashed lines 
are discussed in Figure \ref{fig:LatDep-NTC-folds}.}

\includegraphics[height=\plotheight]{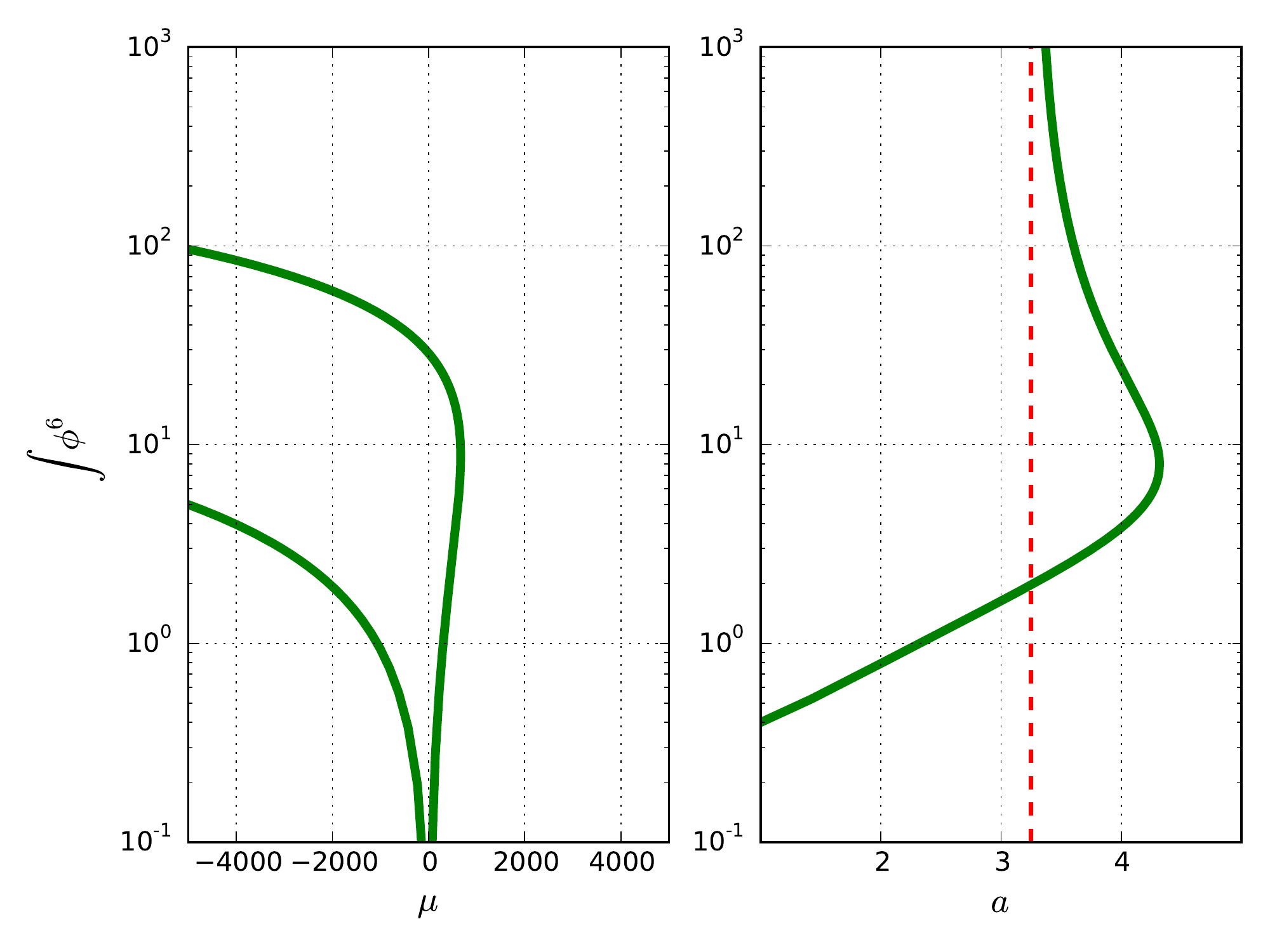}

\caption{\label{fig:LatDep-NTC-folds}
Volume of solutions on an $S^2\times S^1$ 
with $R=1$
as the size $\mu$ of the TT tensor (left-hand side) and 
as the mean curvature $\xi^a$ (right-hand side) are varied.
The left- and right-hand graphs correspond
with the vertical and horizontal gray
dashed lines of Figure \ref{fig:LatDep-NTC-Solvability} respectively.}
\end{figure}

As a test of the robustness of these results, 
consider the same conformal seed data as the previous
example, except that the mean curvature is now
\[
\tau = \left[\frac{2}{3}+\frac12 \cos(2\phi)\right]^a.
\]
Figure \ref{fig:LatDep-NTC-Solvability}
illustrates the multiplicity of solutions we found as we varied
the parameters $\mu$ and $a$.  Here again we find a sharp
transition to the far-from-CMC setting, now at $a=a_*\approx 3.3$. 
There is blowup associated with this transition (Figure \ref{fig:LatDep-NTC-folds}, right-hand side), and again we presume
there is a solution to the limit equation with $\alpha=1$ and $a=a_*$.
For $a>a_*$ we find a solution
at $\mu=0$ in addition to the zero volume solution (Figure \ref{fig:LatDep-NTC-folds}, left-hand side), 
and hence there two solutions for $\mu$ sufficiently small. However,
we find an apparent difference between scaling $\mu$ large and positive
versus large and negative for this seed data.  
For $\mu$ positive and large enough 
(depending on $a$) there are no solutions, just as 
was the case in Figure \ref{fig:LatDepSolvability}.
But for $\mu$ large and negative we were unable to find a fold 
and a consequential transition to zero solutions.  Instead, the
two solutions remain well-separated in volume as $\mu$ is made 
large (Figure \ref{fig:LatDep-NTC-folds}, left-hand side).  We
cannot rule out the possibility that the two branches
eventually merge, but this was not the case out to
$\mu=-2500000$.  Although this data violates inequality
\ref{eq:almostckf} of Theorem \ref{thm:ngn2}, our observations are nevertheless 
consistent with its conclusions.
In particular, unlike Theorem \ref{thm:Nguyen}, Theorem
\ref{thm:ngn2} does not predict non-existence for large TT tensors,
and it is conceivable that \ref{thm:ngn2} holds 
more generally for mean curvatures not satisfying inequality 
\ref{eq:almostckf}.

\begin{figure}
  \includegraphics[height=\plotheight]{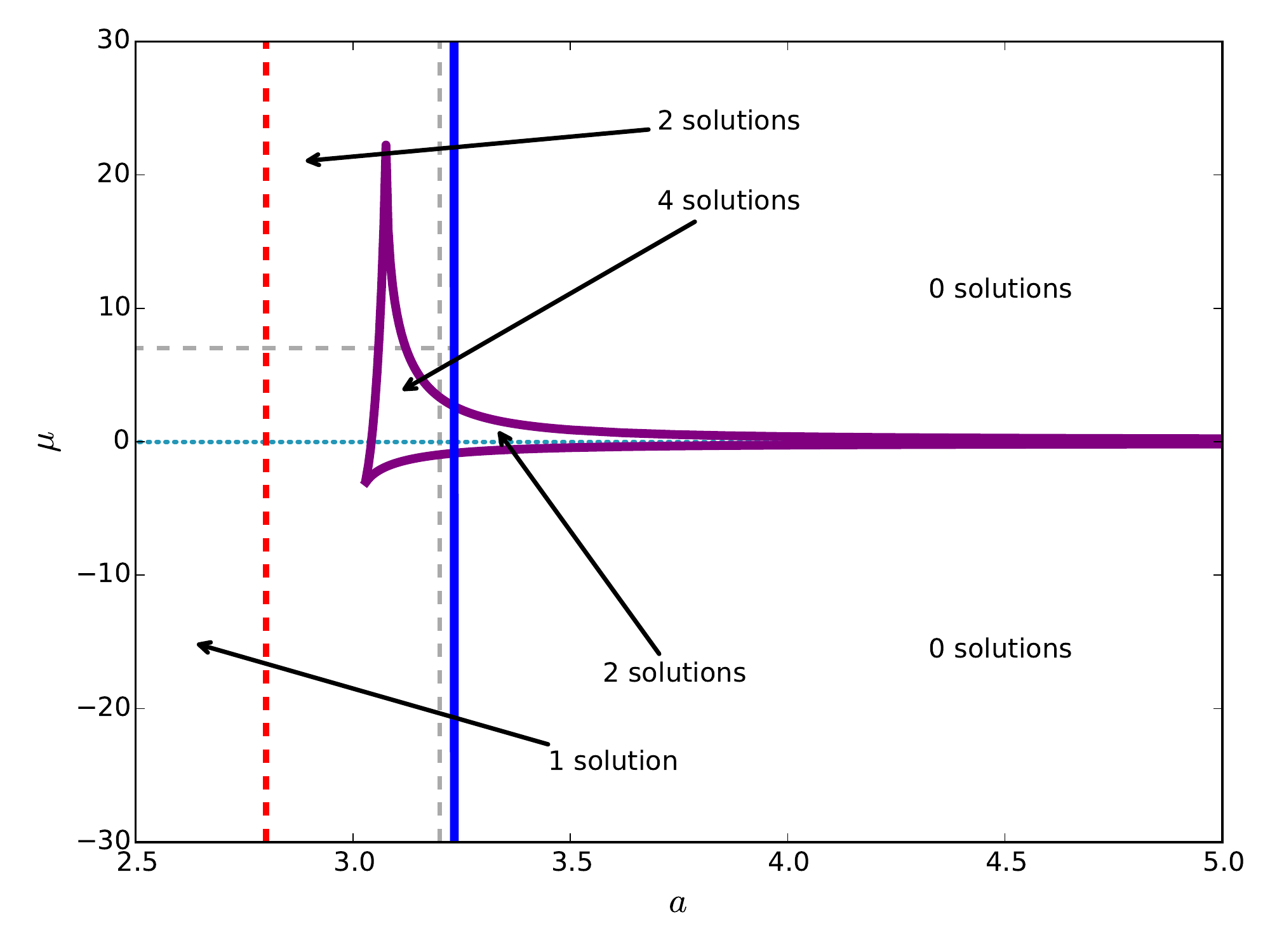}
\caption{  \label{fig:S2H2-Solvability-Pos}
Multiplicity of solutions found on an $S^2\times H^2$ 
with positive scalar curvature $R=0.1$.
Seed data: $\tau=(\frac{2}{3}+\frac{1}{3}\cos(\phi))^a$ 
and $\sigma = \mu \overline \sigma$.
The solid blue and purple lines are computed folds, 
whereas the red dashed line
indicate locations where blowup is inferred. 
At the blue dotted line at $\mu=0$ one solution has
zero volume and should be ignored.  
Solutions along the gray dashed lines
are discussed in Figures \ref{fig:S2H2-ManyFolds}
and \ref{fig:S2H2-MuLoops}.}

\includegraphics[height=\plotheight]{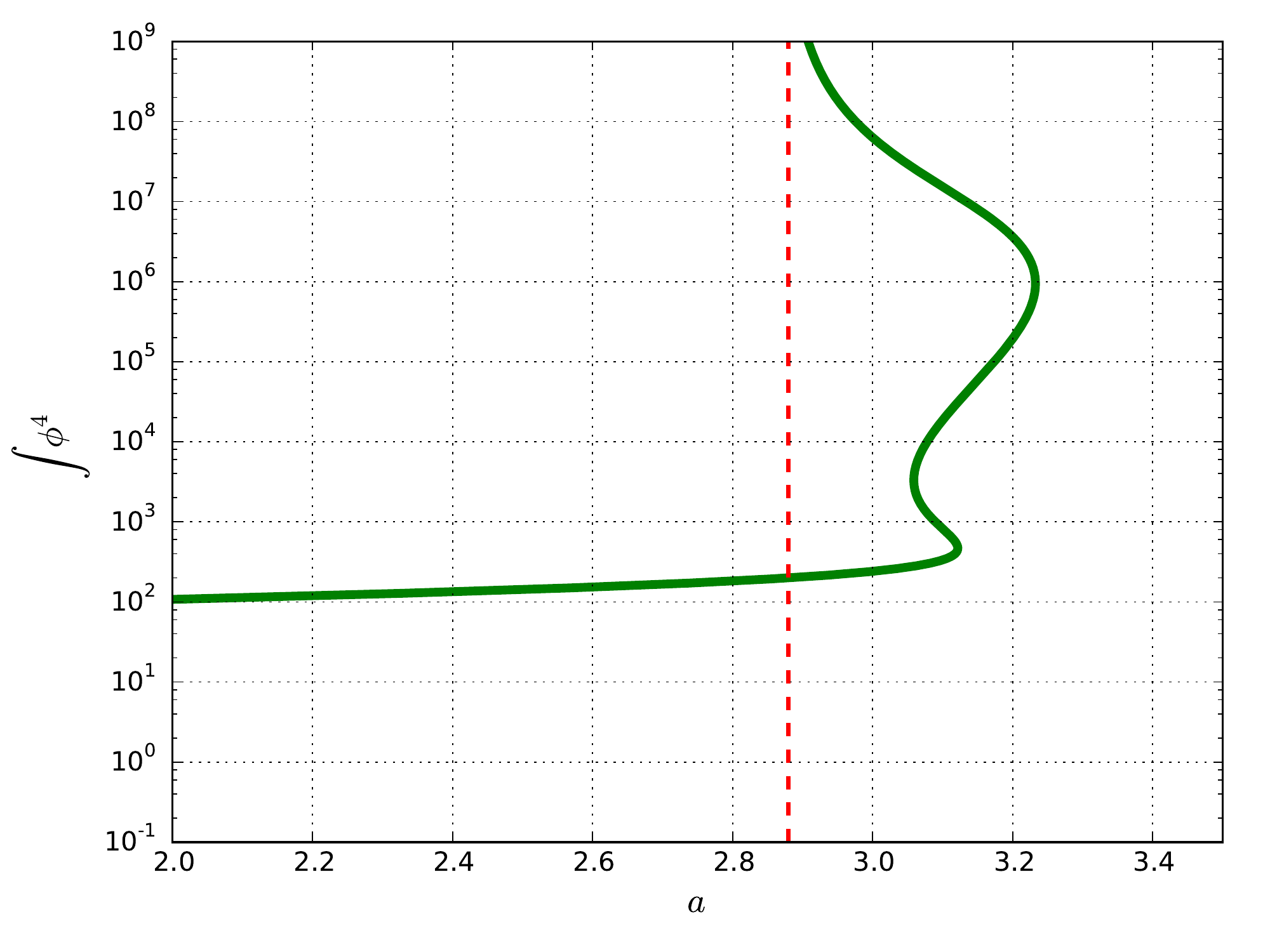}
\caption{\label{fig:S2H2-ManyFolds}
Volume of solutions on an $S^2\times H^2$ 
with $R=0.1$ 
as the mean curvature $\xi^a$ (right-hand side) is varied.
Solutions correspond with horizontal gray
dashed line at $\mu=7$ in 
Figure \ref{fig:S2H2-Solvability-Pos}. At $a=3.1$, four solutions are found for a single
conformal seed dataset.}
\end{figure}

\begin{figure}
  \includegraphics[height=\plotheight]{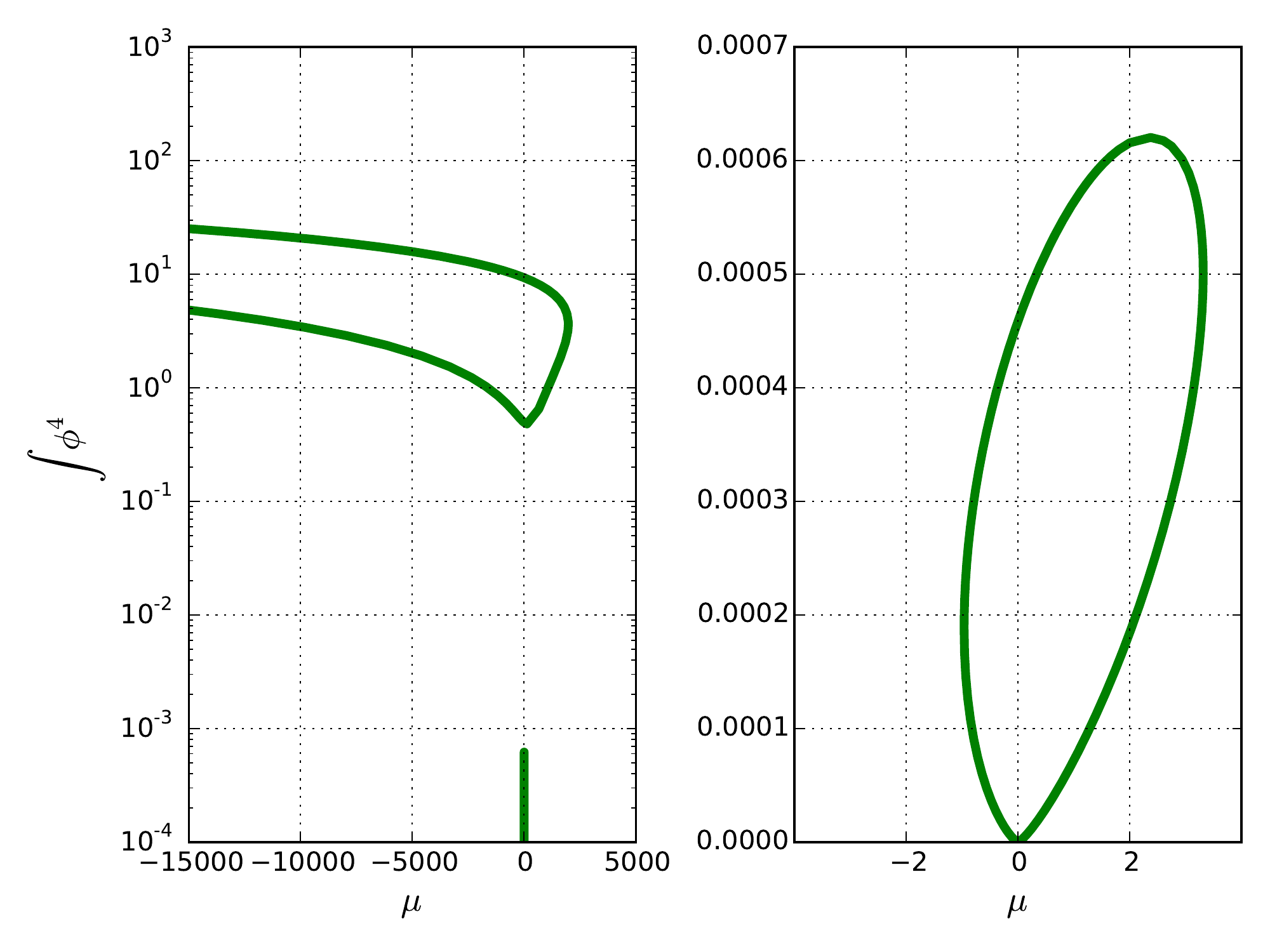}
  \caption{  \label{fig:S2H2-MuLoops}
Left-hand side: volume of solutions on an $S^2\times H^2$ 
with $R=0.1$ as the size $\mu$ of the TT tensor is varied.
Solutions correspond with the vertical gray
dashed line at $a=3.2$ in 
Figure \ref{fig:S2H2-Solvability-Pos}.
Right-hand side shows a detail of the small loop near $\mu=0$.
The large upper loop eventually closes at $\mu\approx-800000$.}
\end{figure}

The previous two examples involve Yamabe-positive data.
To explore the other Yamabe classes we consider latitude-dependent
data on $S^2\times H^2$; by varying the size of the round $S^2$ factor
we can obtain any desired constant scalar curvature.
Before looking at the other Yamabe classes, however, we 
remark that even for Yamabe-positive
data of this type we find differences from what
was observed for $S^2\times S^1$.  Figure
\ref{fig:S2H2-Solvability-Pos}
illustrates the multiplicity of solutions
found on an $S^2\times H^2$ with $R=0.1$, where the
seed data has $\sigma = \mu \overline \sigma$
and
\[
\tau = \left[\frac{2}{3}-\frac13 \cos(\phi)\right]^a.
\]
This seed data is comparable to that used for 
Figure \ref{fig:LatDepSolvability}, but there
are some fine differences in the results.  Again we find
a sharp transition to the far-from-CMC setting, now at
$a=a_*\approx 2.8$.
However, for $a>a_*$ the multiplicity of solutions
is a bit complicated.  There is a primary fold roughly at
$a=3.23$ in the far-from-CMC regime.  Beyond this point, the behavior
is similar to that of Figure \ref{fig:LatDepSolvability}, with
no solutions when $\mu$ is sufficiently large and two solutions 
when $\mu\neq 0$ is sufficiently small.  Between $a=a_*$ and $a=3.23$,
however, the situation has changed from that of Figure \ref{fig:LatDepSolvability}. We found variously between 1 and 4 solutions, 
and no convincing evidence
that there are no solutions when $\mu$ is sufficiently
large at, e.g., $a=2.87$.

Figure \ref{fig:S2H2-ManyFolds},  illustrates how the the various 
folds appear as the dashed line at $\mu=7$ in Figure \ref{fig:S2H2-Solvability-Pos} is traversed. First the right-most
purple fold is encountered, then the left-most and finally the blue fold 
around $a=3.23$ is hit before heading off to the singularity at $a=a_*$.  
Conversely, Figure \ref{fig:S2H2-MuLoops} illustrates the various
solutions along the line $a=3.2$, and we see two disconnected loops 
(the upper branches in Figure \ref{fig:S2H2-MuLoops}, left-hand side,
 rejoin at
$\mu\approx -800000$).  As was discussed previously for Figure 
\ref{fig:LatDep-folds}, Figure \ref{fig:S2H2-MuLoops}
illustrates how one can visualize the 
multiple solutions near $\mu=0$ as perturbations of the usual zero solution at $\mu=0$ and the additional three non-trivial solutions we found there.

\begin{figure}
\ifjournal
\hspace*{-2cm}\includegraphics[height=45mm]{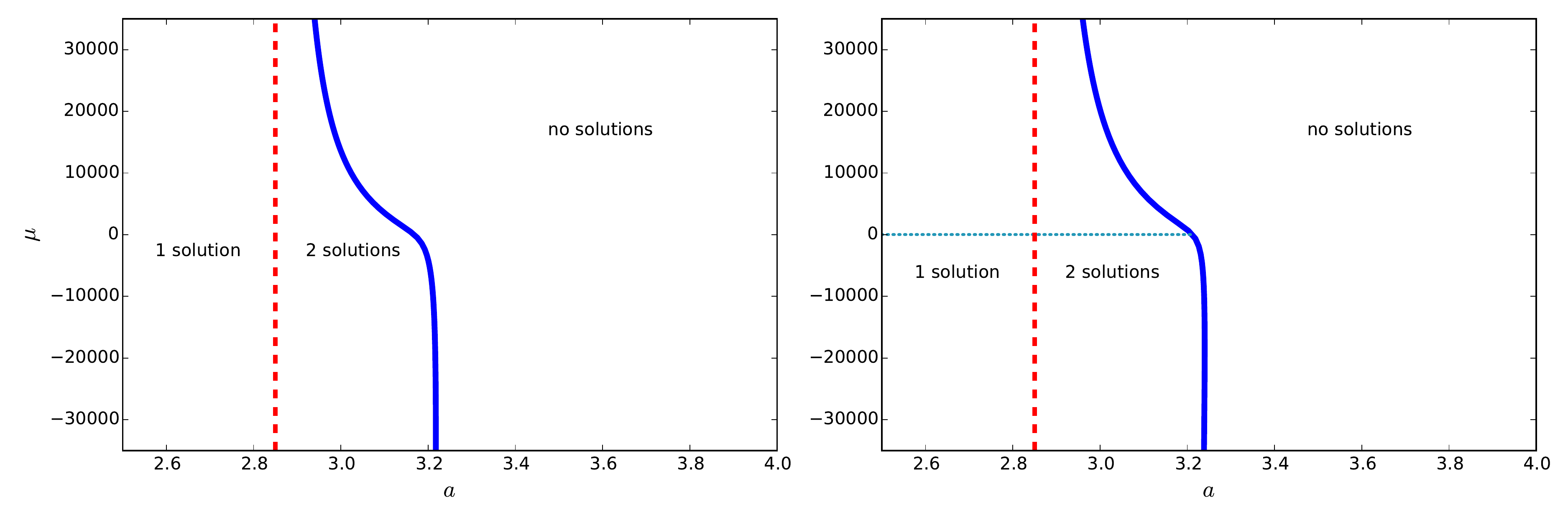}
\else
\includegraphics[height=55mm]{graphs/S2H2-solvability-nonpos.pdf}
\fi
\caption{\label{fig:S2H2SolvabilityNonpos}
Multiplicity of solutions found on an $S^2\times H^2$ 
with scalar curvatures $R=-0.2$ (left-hand side) and
 $R=0$ (right-hand side).
Seed data: $\tau=(\frac{2}{3}-\frac{1}{3}\cos(\phi))^a$ 
and $\sigma = \mu \overline \sigma$.
The solid blue lines are computed folds, 
whereas the red dashed lines
indicate locations where blowup is inferred. 
The folds are comparable with
the blue fold of Figure \ref{fig:S2H2-Solvability-Pos},
which has a similar shape when drawn at this scale.
On the right-hand side ($R=0$), the dotted blue line indicates a zero
solution which should be discounted. Fine details near $\mu=0$
in the region $2.8<a<3.2$ are potentially unresolved; see
Figure \ref{fig:S2H2SigmaZero} (middle).
}
\end{figure}

Turning to non-Yamabe-positive seed data,
Figure \ref{fig:S2H2SolvabilityNonpos} 
illustrates the number of solutions found
with the same conformal data as in Figure \ref{fig:S2H2-Solvability-Pos},
except now with $R=-0.1$ and $R=0$.  For this seed data 
there are no applicable far-from-CMC theorems to guide expectations,
and we find
a situation similar to the Yamabe-negative sign-changing
data of Figure \ref{fig:S1-Yneg-mult}.  There is a sharp transition
to far-from-CMC data at $a_*\approx 2.8$, but now solutions vanish
sufficiently far into the far-from CMC regime. The blue fold
from the Yamabe-positive data of 
Figure \ref{fig:S2H2-Solvability-Pos} persists, but the purple
fold that extended out along the line $\mu=0$ has vanished.
The apparent non-existence of solutions far enough into the
far-from-CMC regime violates the conclusion of 
Theorem \ref{thm:ngn2}, and we therefore suspect that the
Yamabe-positive hypothesis of Theorem \ref{thm:ngn2} is essential.

Although the vertical scales for Figures 
\ref{fig:S2H2-Solvability-Pos} and \ref{fig:S2H2SolvabilityNonpos}
are markedly different, it is not the case that we have missed 
a fine feature near the line $\mu=0$ in Figure 
\ref{fig:S2H2SolvabilityNonpos} corresponding to the purple
folds of Figure \ref{fig:S2H2-Solvability-Pos}. Indeed,
Figure \ref{fig:S2H2SigmaZero} illustrates the volume of solutions
along the line $\mu=0$ (i.e. $\sigma\equiv 0$) for the
three cases $R=-0.2$, $R=0$ and $R=0.1$ illustrated in Figures \ref{fig:S2H2-Solvability-Pos} and \ref{fig:S2H2SolvabilityNonpos}.
For far-from-CMC Yamabe-positive seed data (Figure \ref{fig:S2H2SigmaZero}, right-hand side) we find solutions at $\sigma\equiv 0$; these solutions
are not present for near-CMC data and had not been expected
before \cite{Nguyen:2015}.
By contrast, for the Yamabe-negative seed data the 
near-CMC solutions at $\sigma\equiv 0$ vanish once the
mean curvature is sufficiently far-from CMC (Figure \ref{fig:S2H2SigmaZero}, left-hand side).  At the boundary $R\equiv 0$
(Figure \ref{fig:S2H2SigmaZero}, middle)
we find a narrow band in which solutions exist; none in the near-CMC
case (which is expected) and none for the far-from CMC seed data as well.
The narrow band is reminiscent of the well known 
one-parameter family of solutions found 
found for CMC Yamabe-null seed data when $\tau\equiv 0$ and when $\sigma\equiv 0$.  It is also similar to the somewhat analogous one-parameter
families found for particular Yamabe-null seed data in 
\cite{M11} and \cite{Maxwell:2014b}.  What would have been a vertical 
line of solutions in the analytic cases 
has been deformed to the distorted vertical line
of Figure \ref{fig:S2H2SigmaZero}, middle.

The role of solutions at $\sigma\equiv 0$ appears to be important
for understanding the conformal method in the far-from-CMC setting,
and we believe this is a consequence of $\sigma$ being absent
from the limit equation.  Nevertheless, the interplay between
$\sigma\equiv 0$ and $\sigma\not\equiv 0$ is nuanced. 
For example, the breakdown of the existence of solutions along 
$\sigma\equiv 0$
for Yamabe-negative data (Figure \ref{fig:S2H2SigmaZero}, left-hand side)
at $a\approx 3.2$
is not perfectly correlated with nonexistence of solutions for nearby
values of $a$.  This can be seen by inspecting the blue fold of
Figure \ref{fig:S2H2SolvabilityNonpos}, left-hand side.  It
crosses $\mu=0$ at a value $a\approx 3.2$ but there nevertheless
exist certain solutions for values of $a>3.2$ beyond the crossing point,
but not much larger.

\begin{figure}
\ifjournal
\hspace*{-2cm}\includegraphics[height=45mm]{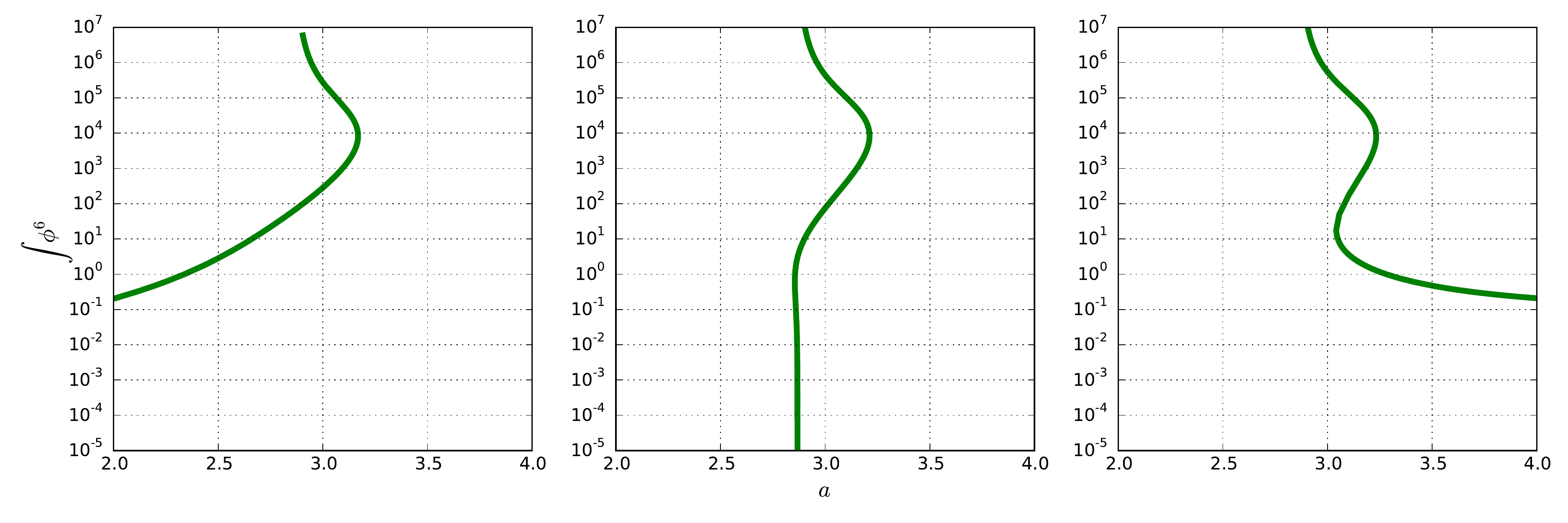}
\else
\includegraphics[height=55mm]{graphs/S2H2-SigmaZero.pdf}
\fi
  \caption{ \label{fig:S2H2SigmaZero}
Volume of solutions along
  $\sigma\equiv 0$ for an $S^2\times H^2$
having $R=-0.2$ (left),
  $R=0$ (middle) and $R=0.1$ (right).  Mean curvature
  $\tau = (\frac23 -\frac 13\cos(\phi))^a$.
Conformal seed data is the same as in Figures \ref{fig:S2H2-Solvability-Pos}
and \ref{fig:S2H2SolvabilityNonpos}.
  }
\end{figure}

Finally, we consider seed data on $S^2\times H^2$ 
with 
$\sigma=\mu \overline \sigma$ and 
\[
\tau = \left[\frac{2}{3}-\frac13 \cos(2\phi)\right]^a,
\]
which can be compared with the seed data used for Figure
\ref{fig:LatDep-NTC-Solvability}.
Figure \ref{fig:S2H2-NTC-solvability-nonneg} illustrates the number of solutions found
for $R=0$ and $R=0.5$, and the outcomes are qualitatively similar 
to those of Figure \ref{fig:LatDep-NTC-Solvability}.
In particular, for $\mu<0$ we do not find evidence of non-existence of solutions. On the other hand, for this same data but $R=-1/2$ we obtain 
multiplicities shown in Figure \ref{fig:S2H2-NTC-solvability-neg}
and find a situation akin to 
what we have seen previously in 
Figure \ref{fig:S1-Yneg-mult} and 
Figure \ref{fig:S2H2SolvabilityNonpos} (left-hand side)
for Yamabe-negative data: no solutions
for $a$ sufficiently large. In an independent computation, not shown,
we found that the location of the blue fold crossing $\mu=0$
(e.g. at $a\approx 108$ in Figure \ref{fig:S2H2-NTC-solvability-neg})
grows as $(-R)^{-13}$ and therefore we believe there is 
indeed a transition at $R=0$ between the two qualitative behaviors
seen here for $R=1/2$ and $R=-1/2$.

\begin{figure}
\ifjournal
\hspace*{-2cm}\includegraphics[height=45mm]{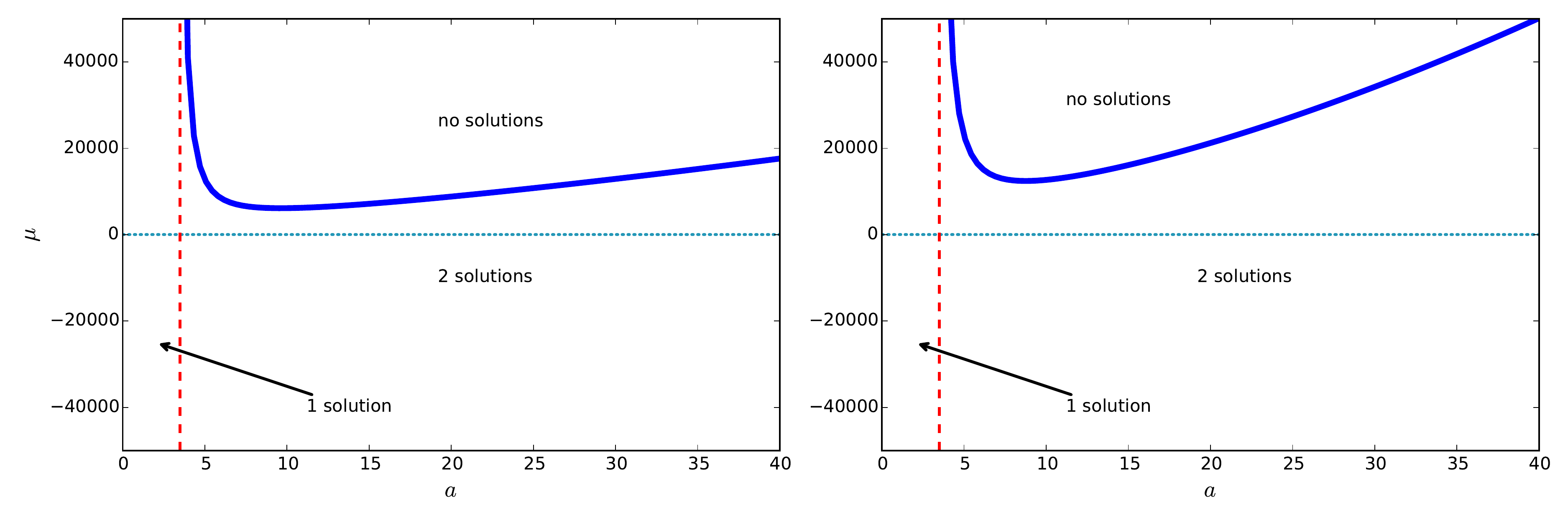}
\else
  \includegraphics[height=55mm]{graphs/S2H2-NTC-solvability-nonneg.pdf}
\fi
\caption{  \label{fig:S2H2-NTC-solvability-nonneg} 
Multiplicity of solutions found on an $S^2\times H^2$ 
with scalar curvatures $R=0$ (left-hand side)
and $R=0.5$ (right-hand side).
Seed data: $\tau=(\frac{2}{3}-\frac{1}{3}\cos(2\phi))^a$ 
and $\sigma = \mu \overline \sigma$.
The solid blue line indicates a fold, whereas the red dashed line
indicate locations where blowup is inferred. 
The blue dotted lines at $\mu=0$ indicate 
one solution has zero volume and should be ignored.}
\end{figure}

\begin{figure}
  \includegraphics[height=\plotheight]{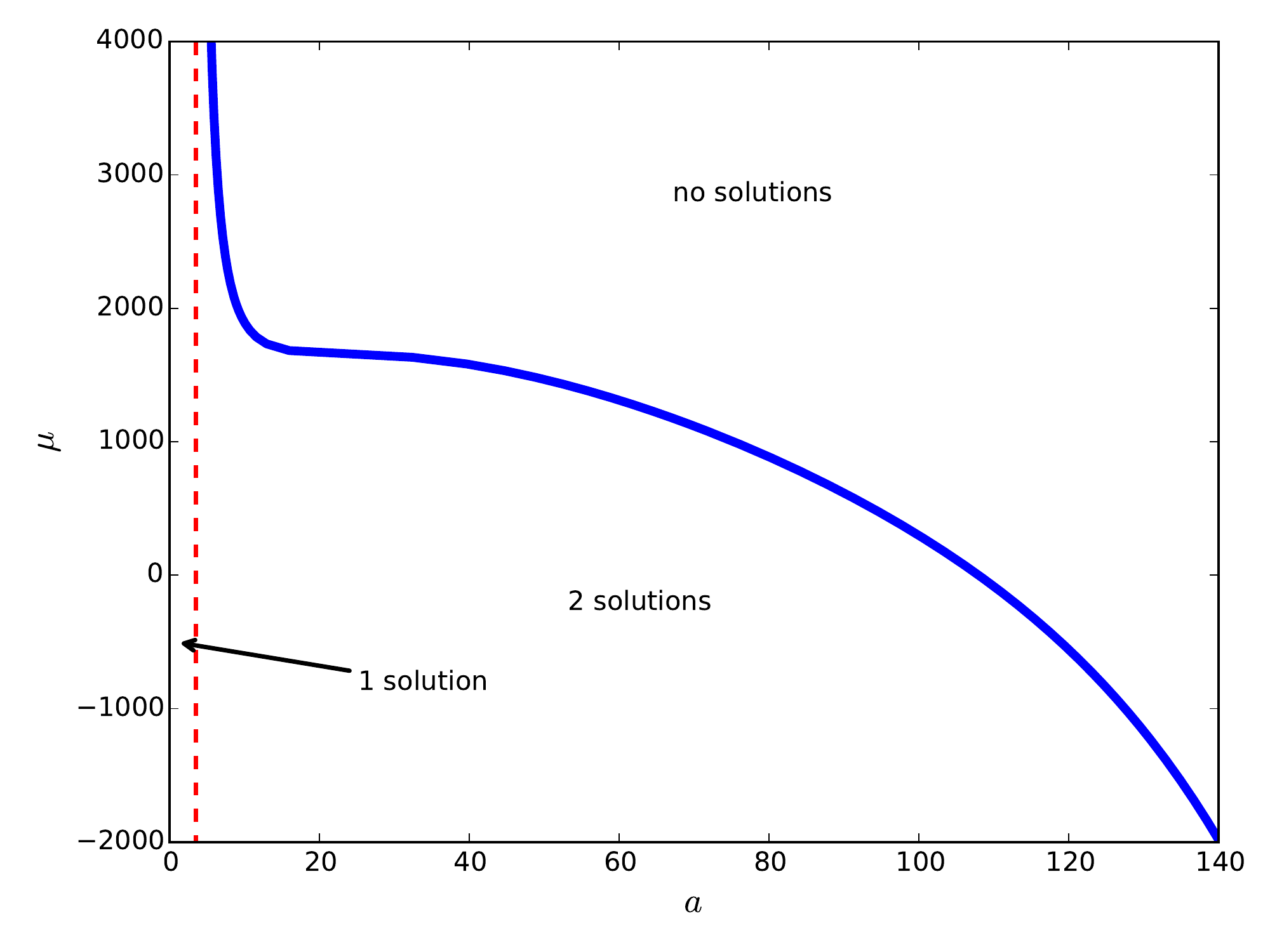}
\caption{  \label{fig:S2H2-NTC-solvability-neg}
Multiplicity of solutions found on an $S^2\times H^2$ 
with scalar curvatures $R=-0.5$
Seed data: $\tau=(\frac{2}{3}-\frac{1}{3}\cos(2\phi))^a$ 
and $\sigma = \mu \overline \sigma$.
The solid blue indicates a fold, whereas the red dashed line
indicate locations where blowup is inferred.}
\end{figure}

\section{Discussion}
   \label{sec:disc}

The limit equation \eqref{eq:limitEqn} appears to play a central
role in the solution theory of the conformal method, at least
for mean curvatures that do not change sign.  In the cases
where we could show that there is no solution of the limit equation
with the symmetry of the data
($S^1$-dependent seed data on $S^1\times M_2$), we were also unable to find
behavior that was any different from the near-CMC theory.  For the
remaining cases where we investigated constant-sign mean curvatures
$\tau=\xi^a$, there was a singular value $a_*$.  For $a>a*$
we found differences from the near-CMC theory: multiple solutions
or non-existence of solutions were the general rule, with
exceptions occuring only at transitions such as folds.  
As $a\to a_*$ from above,
we found solutions of the constraint equations with volumes that
appeared to approach infinity; in particular,
it was always possible to find such solutions with $\sigma\equiv 0$
We conjecture that at $a=a_*$ there
is a solution of the limit equation with $\alpha=1$, in which case
for each $a>a_*$ there is also a solution of the limit equation with 
$\alpha=a/a_*<1$.  That is, far-from-CMC behavior appears to occur
precisely when a solution to the limit equation exists 
for some $\alpha\in (0,1]$.

For sign changing mean curvatures, we found a corresponding transition to 
far-from-CMC phenomena once the mean curvature changed sign.  This was
certainly true for the Yamabe-negative and Yamabe-null data we examined,
and at least weakly so for the Yamabe-positive data of Figure
\ref{fig:S1-Ypos-mult} where narrow folds occurred, but unique solutions
were more typical.  In any event, in these examples, non-standard behaviour
only occurred  for 
conformal seed data where the mean curvature changed sign.  In may be a that
a better parameterization for sign-changing mean curvatures, different
from $\tau = 1 + a\xi$, would provide a sharper, more definitive transition.

Although Theorem \ref{thm:smallTT} does not apply, strictly speaking,
to our examples (as they possess nontrivial conformal Killing fields),
our observations were consistent with it.  Whenever we worked with
Yamabe-positive seed data we were able to find solutions of the
constraint equations, so long as $\sigma\not \equiv 0$ was small enough:
Figures \ref{fig:S1-Ypos-mult}, \ref{fig:LatDepSolvability}, \ref{fig:LatDep-NTC-Solvability}, \ref{fig:S2H2-Solvability-Pos} and \ref{fig:S2H2-NTC-solvability-nonneg} (right-hand side).  Conversely,
except for cases where we could show that there was not a solution
of the limit equation, far-from-CMC Yamabe-negative seed data
lead to non-existence sufficiently far into the non-CMC zone:
 Figures \ref{fig:S1-Yneg-folds}, \ref{fig:S2H2SolvabilityNonpos} (left-hand side) and \ref{fig:S2H2-NTC-solvability-neg}.  
This was true even at $\sigma\equiv 0$, in stark contrast to the
near-CMC theory. 

The situation for Yamabe-null data is harder to characterize.  Sometimes
it behaved like Yamabe-positive data 
(Figures \ref{fig:S1-Yzero-mult} and  \ref{fig:S2H2-NTC-solvability-nonneg}, left-hand side), with
solutions existing for sufficiently small TT tensors. Sometimes
it behaved like Yamabe negative data (Figure \ref{fig:S2H2SolvabilityNonpos}, right hand side), with solutions vanishing far enough into the far-from CMC zone.  Moreover, the analytical work of \cite{M11} shows that
other variations are also possible.

The conclusions of Theorem \ref{thm:ngn2} were found to hold generally,
even for mean curvatures that violate inequality \ref{eq:almostckf},
so long as there appeared to be a solution of the limit equation.
That is, for far-from-CMC Yamabe positive data (with constant-sign mean curvature), we found that there were at least two solutions 
when the TT tensor was small enough (and that there was at least one 
corresponding nonzero solution at $\sigma \equiv 0$).  

On the other hand, Theorem \ref{thm:Nguyen}, which 
also describes non-existence for Yamabe-positive far-from-CMC seed
data when the TT tensor is large, was not found to hold in general.
Indeed, we found a hodge-podge of apparent non-existence phenomena on Yamabe-positive seed data. Sometimes there was an immediate
onset of nonexistence behavior in the far-from-CMC zone (Figure \ref{fig:LatDepSolvability}).  Sometimes nonexistence was brought on by
scaling the TT by a large constant of one sign, but not for large
constants of the other sign (Figures \ref{fig:LatDep-NTC-Solvability} and \ref{fig:S2H2-NTC-solvability-nonneg} (right-hand side)). 
In one case (Figure \ref{fig:S2H2-Solvability-Pos}) there appeared
to be certain far-from-CMC seed data that did not lead to nonexistence
regardless of how large the TT tensor was scaled.  And in the 
sign-changing Yamabe-positive case we examined, existence 
appeared to be pervasive (Figure \ref{fig:S1-Ypos-mult}).
It is similarly hard to pin down precise non-existence behavior
for the Yamabe-null seed data we examined.  

We saw no apparent rule to describe the profiles of the various folds
we saw.  That is, we were unable to discern anything that might help
concretely predict the threshold of non-existence when scaling the TT
tensor or  the specific number of solutions corresponding to given seed data;
although zero, one or two solutions were typical, sometimes there
were more.  

\section{Conclusion}
   \label{sec:conc}

Our numerical work suggests that the
conformal method appears to suffer from pervasive drawbacks
as a parameterization of vacuum, non-CMC solutions of the 
constraint equations. At least among the data we considered,
the general rule was multiple solutions or no solutions at all
once the conformal seed data was sufficiently far-from-CMC.
Because of the limitations of AUTO, we were only able to 
examine highly symmetric seed data, and we therefore only probed
a select few, very special examples.  Nevertheless, it is difficult
to imagine that the many cases of multiple solutions we
found are not stable under small perturbations of the metric that
violate symmetry.

Our results suggest a couple of theorems that might be reasonable
targets for future efforts.  For example, can non-existence of solutions
for sufficiently far-from-CMC Yamabe-negative seed data be established?
However, we caution that it is possible that 
describing the details of the conformal
method for far-from-CMC data will lead to a fuller understanding 
of the conformal method, but also to 
nothing useful about general relativity.  Unless
there is some physics associated with to the multiplicity of solutions or
the various shapes of the folds, it may simply be that the conformal
method is an excellent parameterization of the CMC solutions
of the constraints that breaks down as a chart on the larger
constraint manifold.  Since the conformal method is the only 
general tool available for constructing solutions of constraint
equations \textit{de novo}, it raises the question of whether
a suitable alternative parameterization for non-CMC initial data
exists. One potential was proposed in \cite{Maxwell:2014c} and
examined for near-CMC constructions in \cite{HMM2017},
but its properties in the far-from-CMC setting are unknown
and the broader question of finding a well-behaved
global parameterization of solutions of the constraint equations is
essentally open.
   
\section{Acknowledgments}
   \label{sec:ack}

JD was supported in part by NSF DMS/RTG Award 1345013 and NSF DMS/FRG Award 1262982.
MH was supported in part by NSF DMS/FRG Award 1262982 and NSF DMS/CM Award 1620366.
TK was supported in part by NSF DMS/RTG Award 1345013.
DM was supported in part by NSF DMS/FRG Award 1263544.

\appendix
\section{A constant norm TT tensor on $S^1\times S^2$}
\label{apdx:TT}
In this section we construct a transverse-traceless tensor
on $S^1\times S^2$ that has constant norm and is pointwise
orthogonal to $\ck W$ when $W$ is an $S^1$-dependent
vector field pointing along $S^1$.

Consider normal (polar) coordinates $(r,\theta)$
on the unit round sphere $S^2$ centered at the north pole, so that 
the metric has the form $g=dr^2+\sin^2(r)d\theta^2$.  Let
$\omega = \sin(r) d\theta$.  It is clear that $\omega$ fails
to be continuous at the north and south poles of $S^2$, but is otherwise smooth.  A straightforward
computation shows that $|\nabla \omega|=\cot(r)$.  The singularity at $r=0$ is $O(r^{-1})$, with similar remarks holding at $r=\pi$. Hence
$\omega\in W^{1,p}(S^2)$ for any $p<2$.  Moreover, $\div \omega = 0$ in the region where $\omega$ is smooth (i.e. almost
everywhere) and therefore $\omega$ is weakly divergence free.

Now let $s$ denote a unit speed parameter on $S^1$ and set
\[
\hat \sigma=\omega\otimes ds + ds \otimes \omega
\]
on $S^1\times S^2$.  Clearly $\sigma$ is trace-free. Moreover
since $\omega$ and $ds$ are both divergence free, $\sigma$ is a transverse-traceless tensor on $S^1\times S^2$.  Finally, 
$$
|\sigma|^2 = 2\abs{\omega}^2\abs{dz}^2+4\left<\omega,dz\right> = 2\abs{\omega}^2 = 2 \sin^2(r) \abs{d\theta}^2 =  2 \sin^2(r) (\sin(r))^{-2} =2,
$$
except at $r=0$ and $r=\pi$.  That is, $\abs{\sigma}^2 = 2$ almost everywhere.  Hence $\hat\sigma$ is a constant-norm $W^{1,p}$ transverse-traceless tensor on $S^1\times S^2$ for any $p<2$.
Although this level of regularity is not ideal, it falls
within the a category of regularity  
easily handled for the conformal method (e.g. \cite{CB04}).

If $W=w(s)\partial_s$ then $\ck W = 2w'(\frac{2}{3} ds\otimes ds-\frac{1}{3}g^\circ$), where $g^\circ$ is the round metric on the sphere.  Since
$\hat\sigma$ only has $ds\otimes d\theta$ and $d\theta\otimes ds$ components,
it is pointwise orthogonal to any such $\ck W $.

\bibliographystyle{abbrv}
\bibliography{bib/maxwell-gathers-thoughts,bib/mjh-gr,bib/mjh-papers,bib/mjh-more,bib/mjh-moreyet,bib/books,bib/paper1}
\end{document}